\RequirePackage{amsmath}
\documentclass[final]{amsart}                     % onecolumn (standard format)
%\documentclass[smallcondensed]{svjour3}     % onecolumn (ditto)
%\documentclass[smallextended]{svjour3}       % onecolumn (second format)
%\documentclass[twocolumn]{svjour3}          % twocolumn
%
%\smartqed  % flush right qed marks, e.g. at end of proof
%
\usepackage{graphicx}
%
% \usepackage{mathptmx}      % use Times fonts if available on your TeX system
%
% insert here the call for the packages your document requires
\usepackage{amsthm}
\usepackage{amssymb, amsfonts,verbatim}
\usepackage[title]{appendix}
\usepackage[export]{adjustbox}
\usepackage[singlelinecheck=false,margin=0cm]{caption}
\usepackage{dsfont}
\usepackage{enumitem}
\usepackage{float}
\usepackage{hyperref}
\usepackage{mathtools}
\usepackage{pgfplots}
\usepackage{pgfplotstable}
\pgfplotsset{compat=newest}
\usepackage{subcaption}
\captionsetup{compatibility=false}
\usepackage{tabularx}
\usepackage{textcomp}
\usepackage{tikz}
\usetikzlibrary{datavisualization.formats.functions, arrows.meta,decorations.shapes,decorations.pathreplacing,decorations.markings,arrows,shapes,positioning}
\usepackage{xcolor}
\usepackage{todonotes}
%\usepackage{latexsym}
% etc.
\setlength{\textwidth}{15.6cm}
\setlength{\unitlength}{3mm}
\setlength{\evensidemargin}{0.5cm}
\setlength{\oddsidemargin}{0.5cm}
%
% please place your own definitions here and don't use \def but
 \newcommand{\mb}{\mathbf}

 \newtheorem{pro}{Proposition}
%
% Add a serial/Oxford comma by default.

%
% Insert the name of "your journal" with
 %\journalname{Journal of Mathematical Biology}
 \makeatletter
\@namedef{subjclassname@2020}{%
  \textup{2020} Mathematics Subject Classification}
\makeatother

\title{A case study of multiple wave solutions in a reaction-diffusion system using invariant manifolds and global bifurcations
%\thanks{Grants or other notes
%about the article that should go on the front page should be
%placed here. General acknowledgments should be placed at the end of the article.}
}
%\subtitle{Do you have a subtitle?\\ If so, write it here}

%\titlerunning{Multiple wave solutions in a diffusive predator-prey model}        % if too long for running head

\author{Edgardo Villar-Sep\'ulveda\textsuperscript{$\star$}}
\address{\textsuperscript{$\star$}Department of Engineering Mathematics, University of Bristol, Bristol BS8 1TW, United Kingdom.} 
\email{rt20469@bristol.ac.uk}

\author{Pablo Aguirre\textsuperscript{$\star \star$}}
\address{\textsuperscript{$\star \star$}Departamento de Matem\'atica, Universidad T\'ecnica Federico Santa Mar\'ia, Casilla 110-V, Valpara\'iso, Chile.}
\email{pablo.aguirre@usm.cl}  
\author{V\'ictor F. Bre\~na--Medina\textsuperscript{$\dagger$}}
\address{\textsuperscript{$\dagger$}Department of Mathematics, ITAM, R\'io Hondo 1, Ciudad de M\'exico 01080, M\'exico.} 
\email{victor.brena@itam.mx}
\thanks{EVS was partially funded by Programa de Incentivo a la Iniciaci\'on Cient\'ifica PIIC DGIIP-UTFSM and his work was carried out at the Departamento de Matem\'atica, Universidad T\'ecnica Federico Santa Mar\'ia. EVS and PA were partially funded by Proyecto Interno UTFSM PI-LI-19-06. PA also thanks Proyecto Basal CMM Universidad de Chile. VFBM thanks the financial support by Asociaci\'on Mexicana de Cultura A.C.}
\begin{document}
%\authorrunning{Short form of author list} % if too long for running head

%\institute{Edgardo Villar--Sep\'ulveda and Pablo Aguirre\at
%Departamento de Matem\'atica,
%              Universidad T\'ecnica Federico Santa Mar\'ia, Casilla 110-V, Valpara\'iso, Chile.
%           %   Tel.: +56-9-87496020\\
%%              Fax: +123-45-678910\\
%              \email{edgardo.villar.13@sansano.usm.cl, pablo.aguirre@usm.cl}           %  \\
%%             \emph{Present address:} of F. Author  %  if needed
%           \and
%           V\'ictor F. Bre\~na--Medina \at
%           Department of Mathematics, ITAM, R\'io Hondo 1, Ciudad de M\'exico 01080, M\'exico. \email{victor.brena@itam.mx}
%}

%\date{Received: date / Accepted: date}
% The correct dates will be entered by the editor

\begin{abstract}
A thorough analysis is performed  to find  traveling waves in a qualitative reaction-diffusion system inspired by  a predator-prey model.
We provide rigorous results coming from a standard local stability analysis, numerical bifurcation analysis, and relevant computations of invariant manifolds to exhibit homoclinic and heteroclinic connections, and periodic orbits in the associated traveling wave system with four components. In so doing, we present and describe a zoo of different traveling wave solutions. In addition, homoclinic chaos is manifested via both saddle-focus and focus-focus bifurcations as well as a Belyakov point. An actual computation of global invariant manifolds near a focus-focus homoclinic bifurcation is also presented to unravel a multiplicity of wave solutions in the model.
\end{abstract}

\keywords{ Homoclinic and heteroclinic orbits, traveling waves, invariant manifolds, bifurcation analysis.}

\subjclass[2020]{37N25, 35C07, 92D40, 37C29, 37D10, 37G20, 35Q92}

\maketitle

%\keywordsname{Homoclinic and heteroclinic orbits, traveling waves, invariant manifolds, bifurcation analysis, Allee effect.}

% \PACS{PACS code1 \and PACS code2 \and more}
%\subjclass{37N25, 35C07, 92D40, 37C29, 37D10, 37G20, 35Q92}
%
\pagestyle{myheadings}
\thispagestyle{plain}
\markboth{E.~Villar--Sep\'ulveda, P.~Aguirre and V.~Bre\~na--Medina}
{Multiple wave solutions in a reaction-diffusion system}

%%%%%%%%%%%%%%%%%%%%%%%%%%%%%%%%%%%%%%%%%%%%%%
\section{Introduction}
\label{intro}
%%%%%%%%%%%%%%%%%%%%%%%%%%%%%%%%%%%%%%%%%%%%%%

Invariant manifold analysis and global bifurcations are among the {\it avant garde} of research topics in nonlinear dynamical systems. 
From the seminal works of  L.~P.~Shil\-ni\-kov \cite{afrai,shil65,shil70} onwards, global bifurcations and invariant manifolds have gained a lot of attention from the dynamical systems community; see, for instance, the survey by Guckenheimer {\it et al.}~\cite{Gu15}, and references therein for different high-profile scenarios including slow-fast dynamics, homoclinic orbits in high-dimensions, and traveling waves phenomena. Geometrically, global bifurcations of vector fields are associated with {invariant manifolds} of equilibria and/or periodic orbits, quasiperiodic invariant tori~\cite{guckenheimer,sandstede,kuznetsov}, and slow manifolds of systems with multiple timescales~\cite{jose1,jose2}. Perturbations of the system parameters typically cause global rearrangements of such invariant manifolds, leading to reorganizations of the overall dynamics in the phase space. These can bring forth the creation/destruction of homoclinic and heteroclinic connections, and the (dis)appeareance of attracting (repelling) invariant objects, including periodic orbits and even chaotic dynamics~\cite{kuznetsov,shilshil}. Hence, global invariant manifolds emerge as the ``building blocks" of a dynamical system, i.e., those essential objects help to explain how the overall ``architecture" of phase space is organized.

One of the applications of invariant manifold analysis lies on the existence of traveling wave solutions for reaction-diffusion equations. 
%Recent studies have considered different spatial phenomena in population dynamics, such as propagation and migration \cite{coding,lewis}. These have led to obtaining a solid argument that highlights the relevance of their study by finding traveling wave solutions that move in space, preserving their shape as time passes \cite{Li,lin,tw}. 
This kind of  solution emerges as a suitable mathematical approach to describe wave-like
spatial movement of populations, transport of nutrients and biological substances, etc.; see, for instance, the textbooks~\cite{mathbio2,mathbio1} and references therein. 
Indeed, traveling waves represent spatiotemporal transitions from one homogeneous steady state to another one, or to itself~\cite{sandstede,Li,mathbio1,tw,wu14}. %Moreover, recent evidence indicates that multi-year periodic traveling waves can also be found~\cite{tw}.
Typically, the mathematical analysis to find this kind of solution involves the reduction of the reaction-diffusion equations into a system of ordinary differential equations in which one searches for heteroclinic or homoclinic orbits.
However, the problem of obtaining these connecting orbits and  associated global (un)stable manifolds is a challenging task. % Moreover, the type of solutions are determined by the geometry of the spatial domain and whether boundary conditions in system~\eqref{difusion} are Neumann, Dirichlet or periodic. 
With the exception of a few concrete examples (see, e.g.,~\cite{sandstede}), in general, there are no analytical expressions for homoclinic orbits or non-local normal forms. Hence, it is frequent to make use of reductions to Poincar\'e maps in suitable cross sections, center manifold reductions and other analytical approaches to prove the existence of intersecting invariant manifolds giving rise to homoclinic and heteroclinic connections; see~\cite{twpredatorprey,hsu,twpredatorprey3,manna,sandstede98,Takahashi2005,Yamashita2018} for different examples.

%or numerical techniques to obtain them; see~\cite{Takahashi2005,Yamashita2018} as examples of this kind of analysis in the context of insect dispersal. 
%Other analytical approaches involve the 

The purpose of this paper is to establish the existence of traveling wave solutions for the following reaction-diffusion system:
\begin{eqnarray}
	\left\{ \begin{array}{l}
	u_t=D_1u_{xx}+u(u-m)(1-u)(u+v)-\alpha uv\,,
	\\[1.5ex]
	v_t=D_2v_{xx}+\beta uv-\gamma v(u+v)\,,
	\end{array}
	\right. \label{difusion}
\end{eqnarray}
where  short notation is used for partial derivatives: $w_t=\partial w/\partial t$ and $w_{xx}=\partial^2 w/\partial x^2$. 
In~\eqref{difusion}, $u=u(x,t)$ and $v=v(x,t)$ are the unknown variables as functions of the spatial variable $x\in [-L,L]$ and time $t>0$. The characteristic length of the state variables interaction domain is assumed to be such that $L\to\infty$ as traveling waves solutions are known to arise in systems as~\eqref{difusion} (see e.g.~\cite{VBrena,twpredatorprey3,lewisli,dolnik});  diffusion rates $D_1,D_2>0$ correspond to mobilities which are a measure of the spatial dispersion efficiency of $u$ and $v$, respectively~\cite{mathbio1}.

System~\eqref{difusion} is inspired by a predator-prey model from~\cite{aguirre} accounting for strong Allee effect on prey and ratio-dependent functional response. However, the main purpose of~\eqref{difusion} is not to duplicate exactly quantitative aspects of the predator-prey interactions from the model it is inspired by. Rather, \eqref{difusion} is meant as an elementary, minimal model in which one can display the sorts of mathematical relations between variables underlying the connections in~\cite{aguirre}. Hence, we refrain from calling $u$ and $v$ as the prey and predators, respectively, in order to avoid misunderstandings with the interpretation of results of the conceptual model~\eqref{difusion}.
We note that our approach to \eqref{difusion} is similar to that of other qualitative models in biology such as the celebrated FitzHugh-Nagumo model for action potentials in neurons~\cite{mathbio1}, or the Izhikevich~\cite{izhi}, Hindmarsh-Rose~\cite{a_shil} and the canonical Ermentrout-Kopell~\cite{thetamodel} models. Indeed, while these abstract models are constructed less closely to physiological features and thus less interpretable, they succeed in portray diverse essential neuronal behaviors with just the minimal mathematical ingredients~(see~\cite{mathfoundneuro} and references therein).

With a strategic combination of numerical methods for invariant manifolds and bifurcation theory, we find the traveling wave solutions identifying each of them as a specific heteroclinic/homoclinic connection or a periodic orbit in the four-dimensional phase space of the associated ODEs. We classify these solutions into 12 different classes depending on the topological type of the associated orbit. We also determine conditions on the model parameters so that there is such a particular kind of solution and identify homoclinic chaotic dynamics as one of the sources of complicated behavior. 
%All in all, this collection of dynamical objects represent desired wave fronts, wave pulses and wave trains, corresponding to bounded solutions of \eqref{difusion} which ensure survival of both populations in the long term. 

Today, homoclinic and heteroclinic orbits can readily be computed and continued with software packages like {\sc Auto}~\cite{auto} (with its extension {\sc HomCont}~\cite{homcont}) and {\sc Matcont}~\cite{Dhooge} with high accuracy. In addition, one can locate homoclinic and heteroclinic connections as intersections of global invariant manifolds. This can be achieved by direct computation and inspection of the manifolds~\cite{ako13,shilnikov}, or by setting additional techniques such as Lin's method~\cite{bk-tr}. While one-dimensional invariant manifolds can be approximated using straightforward integration from a given initial condition, the computation of higher-dimensional manifolds of equilibria and periodic orbits requires advanced numerical techniques. 
The two-dimensional global manifolds in this paper are computed as families of orbit segments, which can be obtained as solutions of a suitable boundary value problem (BVP), irrespective of the vector field undergoing a homoclinic or heteroclinic bifurcation. This allows us to make use of {\sc Auto} to implement and solve the BVP; then, the manifold is ``built up'' by continuation of the respective orbit segments~\cite{numericalmanifold,krauskopf}; see also~\cite{aguirrehom,adko11,shilnikov,Gu15} for further details and applications. 
Moreover, while some works have dealt with traveling waves associated with three-dimensional vector fields~\cite{lin,Takahashi2005,Yamashita2018}, our problem involves a four-dimensional phase space, which is a major challenge. Although the human brain is efficient when capturing depth in flat images of 3-dimensional objects, this ability is not as effective in higher dimensions~\cite{siads20,Osinga}. When trying to visualize objects in a 4D phase space ---such as invariant manifolds---, standard projections may give rise to false intersections between them. These artifacts due to projections must be detected and differentiated from real intersections to ensure or discard the existence of homoclinic or heteroclinic connections. To do so, we make extensive use of dynamical systems theory and topological arguments to justify our findings. In addition, wave trains are found as periodic orbits originating via Hopf bifurcations; see also~\cite{twpredatorprey,hsu,twpredatorprey3,manna,claire} for other uses of this theoretical argument.

This paper is organized as follows: \S\ref{sec:preliminares} presents some definitions, notation and preliminary results. Local stability analysis of steady states is included in \S\ref{sec:local}, while a bifurcation analysis is presented in \S\ref{sec:bif}. Wave pulses, wave trains and wave fronts are discussed in \S\ref{sec:homoclinic}, \S\ref{sec:cycles} and \S\ref{sec:heteroclinic}, respectively. \S\ref{sec:focusfocus} presents a description of the multiple wave fronts found near a focus-focus homoclinic bifurcation. \S\ref{sec:cd-plane} discusses the influence of the propagation speed and the diffusion ratio on the occurrence of the different wave pulses. \S\ref{sec:planes} analyzes the existence of traveling fronts in two invariant planes. Finally, \S\ref{sec:discussion} presents a summary and discussion of the main results.
%In addition, in the \cref{ap:manifolds}, we briefly review the method we used to obtain the global two-dimensional manifolds.

{\color{green}

}

%%%%%%%%%%%%%%%%%%%%%%%%%%%%%%%%%%%%%%%%%%%
\section{Preliminaries and first examples}
\label{sec:preliminares}
%%%%%%%%%%%%%%%%%%%%%%%%%%%%%%%%%%%%%%%%%%%

For future convenience, the first step to study traveling wave solutions in \eqref{difusion} is to make a time rescaling and a change of parameters given, respectively, by
$$
	t\to D_2t, \qquad (d,s,b,g,a,m)=\left(\dfrac{D_1}{D_2},\dfrac{1}{D_2},s\beta,s\gamma, s\alpha,m\right)\in  \mathbb{R}^5_+\times ]0,1[. \label{param}
$$
Thus, we can write system \eqref{difusion} equivalently as
\begin{gather}
	\left\{\begin{array}{l}
	u_t=du_{xx}+su(u-m)(1-u)(u+v)-auv;
	\\[1.5ex]
	v_t=v_{xx}+buv-gv(u+v).
	\end{array}
	\right. \label{edp}
\end{gather}
Here, $d=D_1/D_2$ represents the ratio of diffusion rates of $u$ and $v$, respectively, and appears as an explicit parameter in \eqref{edp}. If $d>1$ (resp. $d<1$),  $u$ is more (resp. less) efficient to disperse in space compared to $v$.

We now consider the so-called wave variable $z= x+ct$, where $c>0$ is the wave speed, and we look for solutions of \eqref{edp} of the form $U(z)=u(x,t)$, $V(z)=v(x,t)$. Applying the chain rule and substituting this into \eqref{edp}, we obtain the following set of second order ordinary differential equations 
\begin{eqnarray}
	\left\{\begin{array}{l}
	c\dfrac{dU}{dz}=d\dfrac{d^2U}{dz^2}+sU(U-m)(1-U)(U+V)-aUV;
	\\[1.5ex]
	c\dfrac{dV}{dz}=\dfrac{d^2V}{dz^2}+bUV-gV(U+V).
	\end{array}
	\right. \label{edp2}
\end{eqnarray}

Naming the auxiliary variables $W=dU/dz$ and $R=dV/dz$,
system \eqref{edp2} can be expressed equivalently as the vector
field
\begin{eqnarray}
	X:\left\{\begin{array}{l}
	\dfrac{dU}{dz}=W,
	\\[1.5ex]
	\dfrac{dV}{dz}=R,
	\\[1.5ex]
	\dfrac{dW}{dz}=\dfrac{1}{d}\left(cW-sU(U-m)(1-U)(U+V)+aUV\right),
	\\[1.5ex]
	\dfrac{dR}{dz}=cR-bUV+gV(U+V).
	\end{array}
	\right. \label{sist}
\end{eqnarray}

 Inspired by the biological origins of  \eqref{edp},  we restrict our analysis of \eqref{sist}  to 
%The biologically relevant region of the four-dimensional phase space of \eqref{sist} is 
the set $\Omega=\{(U,V,W,R)\in\mathbb{R}^4:\,\,U\geq0,V\geq0\}$.
A traveling wave of \eqref{edp} is any bounded solution of the system \eqref{sist} contained in the domain $\Omega$. These solutions are functions that ``travel" in space, preserving their form as time goes by~\cite{sandstede,mathbio2,mathbio1}. %A sketch of a traveling wave moving to the left with speed $c>0$ as time increases is depicted in Fig.~\ref{fig:perfilviajero}. 
%
%\begin{figure}[htbp]
%  \centering
%\includegraphics[scale=0.9]{}
%	\caption{Sketch of the spatial propagation of a traveling wave solution, with $c>0$.}
%	\label{fig:perfilviajero}
%\end{figure}

As we need to find bounded solutions of the system \eqref{sist}, we focus our attention on three special types of traveling waves: wave pulses, wave fronts, and wave trains~\cite{sandstede}. %Typically, pulses and fronts are associated with homoclinic and heteroclinic orbits, respectively, in the reduced ODE system (such as \eqref{sist}). In turn, wave trains correspond to periodic orbits of the associated ODE system.
%
%
%%%%%%%%%%%%%%%%%%%%%%%
%\subsection{Pulses and fronts as homoclinic and heteroclinc orbits}
%%%%%%%%%%%%%%%%%%%%%%%
%
%An orbit of the vector field \eqref{sist} is called {\em homoclinic} if it connects an equilibrium to itself converging to this point as the orbit parameter $z\rightarrow\pm\infty$. 
 Fig.~\ref{fig:pulsodeonda}(a) shows an actual homoclinic orbit of \eqref{sist} found with the method presented later, in \S\ref{sec:homoclinic}. The homoclinic orbit connects the equilibrium point $\mb{q}=(q_u,q_v,0,0)$ (given explicitly in \S\ref{sec:local}) to itself.  The homoclinic connection is an orbit in the unstable manifold of $\mb{q}$, $W^u(\mb{q})$, that comes back to $\mb{q}$ along its stable manifold $W^s(\mb{q})$, i.e., it is in the intersection $W^u(\mb{q})\cap W^s(\mb{q})$.
 The time series of $U$ and $V$ associated with this connecting orbit
are shown in Fig.~\ref{fig:pulsodeonda}(b) in blue and orange, respectively. The profile of this solution is characterized by a large deviation (or pulse) in the amplitudes of $U$ and $V$ followed by a convergence back to the resting state. This corresponds to a wave pulse in the original system \eqref{edp} that travels from a spatially homogeneous stationary solution to itself. Therefore, homoclinic orbits of~\eqref{sist} correspond to wave pulses of \eqref{edp}. %Formally, this wave pulse satisfies for all $x\in\mathbb{R}$ in $\eqref{edp}$, and respectively in \eqref{sist}, that
%
%\begin{flalign}\label{eq:hom}
%    \lim\limits_{t \to \pm \infty} (u(t,x),v(t,x))=(q_u,q_v) \: \Leftrightarrow \: \lim\limits_{z \to \pm \infty}\left(U(z),V(z),W(z),R(z)\right)=\mb{q}\,. 
%\end{flalign}

\begin{figure}
	\centering
	\includegraphics[width=\textwidth]{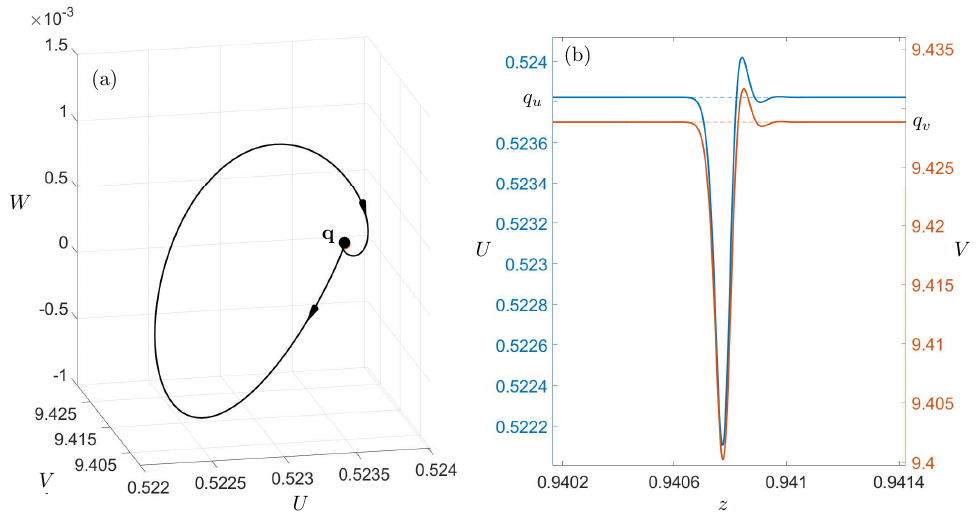}
	\caption{Profile of a wave pulse. Panel (a) shows a homoclinic orbit which joins the equilibrium $\mb{q}$ to itself in the long term, projected onto the $UVW$ space, while panel (b) shows the time series of $U$ and $V$ associated with this solution in blue and orange, respectively; here, the range of $z$ values in the horizontal axis is restricted to the interval where the variables $U$ and $V$ develop the pulse. Parameter values are $a=24$, $b=19$, $g=1$, $m=0.0463358$ are taken from \cite{aguirre} while $c=1$, $s=100$, $d=1.3080156$ are chosen after a bifurcation analysis in \S\ref{sec:bif}.}
	\label{fig:pulsodeonda}
\end{figure}

%On the other hand, an orbit is said to be {\em heteroclinic} if it converges to different equilibra at each end $z\rightarrow\infty$ or $z\to-\infty$. Heteroclinic orbits of \eqref{sist} correspond to wave fronts of \eqref{edp}. 
Fig.~\ref{fig:frentedeonda}(a) shows a heteroclinic orbit connecting $\mb{q}$ to the equilibrium $\mb{p}=(p_u,p_v,0,0)$ (given explicitly below, in \S\ref{sec:local}).  The heteroclinic connection is an orbit in $W^u(\mb{q})$ which moves away from $\mb{q}$, but it is also contained in the stable manifold of $\mb{p}$ $W^s(\mb{p})$, and hence, heads toward $\mb{p}$. That is, this heteroclinic orbit lies in $W^u(\mb{q})\cap W^s(\mb{p})$. This corresponds to a wave of \eqref{edp} that makes the transition from one spatially homogeneous stationary solution to another as is shown in Fig.~\ref{fig:frentedeonda}(b). %Formally, this wave front satisfies
%
%\begin{equation}\label{eq:het}
%    \begin{array}{cl}
%&	
%		\left.\begin{array}{l}
%					\hspace*{-0.1cm}%
%			\displaystyle{\lim_{t \to -\infty} \mathbf (u(t,x),v(t,x))=(q_u,q_v)}, \hspace{4mm} \forall x\in\mathbb{R},
%			\\[2ex]
%			\hspace*{-0.1cm}
%			\displaystyle{\lim_{t \to +\infty} \mathbf (u(t,x),v(t,x)) =(p_u,p_v)}, \hspace{4mm} \forall x\in\mathbb{R},
%		\end{array}
%		\right\}
%		\text{in \eqref{edp}}\\
%	\Leftrightarrow &	\\
%&
%		\left.\begin{array}{l}
%			\hspace*{-0.1cm}
%			\displaystyle{\lim_{z \to -\infty} (U(z),V(z),W(z),R(z))=\mb{q}},
%			\\[2ex]
%			\hspace*{-0.1cm}
%			\displaystyle{\lim_{z \to +\infty} (U(z),V(z),W(z),R(z)) =\mb{p}}, 
%		\end{array}
%		\right\}
%		\text{in \eqref{sist}.}
%\end{array}
%\end{equation}

\begin{figure}
	\centering
	\includegraphics[width=\textwidth]{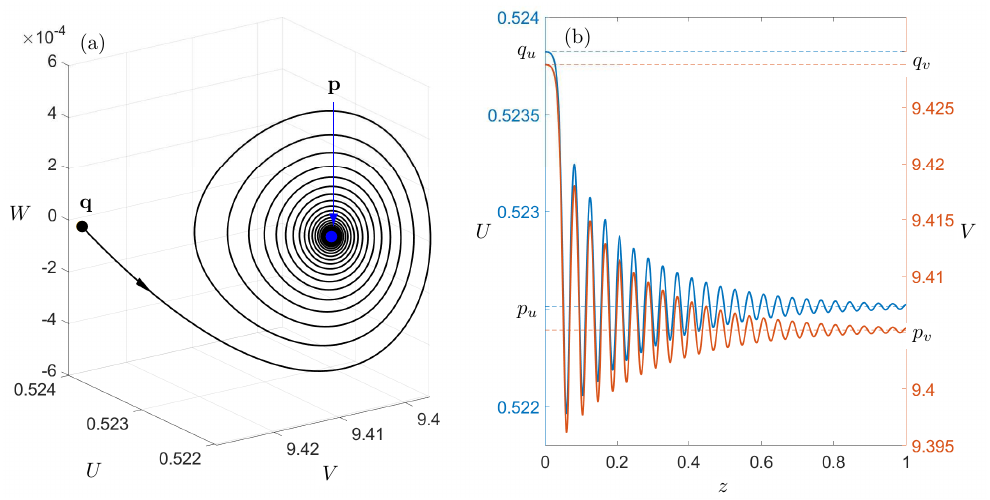}
	\caption{Profile of a wave front. Panel (a) shows a heteroclinic orbit from $\mb{q}$ to $\mb{p}$, projected onto the $UVW$ space, while panel (b) shows the time series of $U$ and $V$ associated with the same solution. Parameter values are the same as in Fig.~\ref{fig:pulsodeonda} except for $d=2.2883206$, and $c=0.4372925$.}
	\label{fig:frentedeonda}
\end{figure}

%From \eqref{eq:hom} and \eqref{eq:het}, 
The homoclinic and heteroclinic orbits (and their associated time series) as solutions of \eqref{sist} are parameterized by $z\in(-\infty,\infty)$.  However, for computational purposes, this independent variable is rescaled to $z\in[0,1]$ in all our results; this is actually a standard procedure with numerical continuation methods (see~\cite{doe-book,numericalmanifold} and references therein). Moreover, in Fig.~\ref{fig:pulsodeonda} (and for every other wave pulse shown throughout this paper) we restrict the values of $z$ to those compact subintervals of $]0,1[$ where the pulses are easier to see.

%%%%%%%%%%%%%%%%%%%%%%%
%\subsection{Wave trains and periodic orbits}
%%%%%%%%%%%%%%%%%%%%%%%

The third kind of wave solution of \eqref{edp} we are interested in is wave trains. These solutions correspond to periodic orbits of system \eqref{sist}, as is illustrated in Fig.~\ref{fig:trendeonda}. %A cycle of \eqref{sist} is associated with a traveling wave of \eqref{edp} with a periodic profile in both space and time. More formally, for every periodic orbit in the domain $\Omega$ of \eqref{sist} there exists $T>0$ (called the {\em period}) such that 
%\begin{flalign}\label{eq:per}
%	& \left(u\left(\hat x,\hat t \right),v\left(\hat x,\hat t \right)\right)=(u(x,t),v(x,t)) \text{ in \eqref{edp}}, \nonumber\\
%	\Leftrightarrow & \\
%	& (U(\hat z),V(\hat z),W(\hat z),R(\hat z))=(U(z),V(z),W(z),R(z)) \text{ in \eqref{sist}}, \nonumber
%\end{flalign}
%where $\hat x=x+\alpha T,$ $\hat t=t+\dfrac{(1-\alpha)}{c}T,$ $\hat z=z+T,$ and $\alpha\in [0,1]$.
%
For computational purposes, the period of every periodic orbit of \eqref{sist} is rescaled to $T=1$; see~\cite{doe-book,numericalmanifold}. In particular, in Fig.~\ref{fig:trendeonda} (and for every other wave train shown throughout this paper) we restrict the values of $z$ to one such period of the cycle.

\begin{figure}
	\centering
	\includegraphics[width=\textwidth]{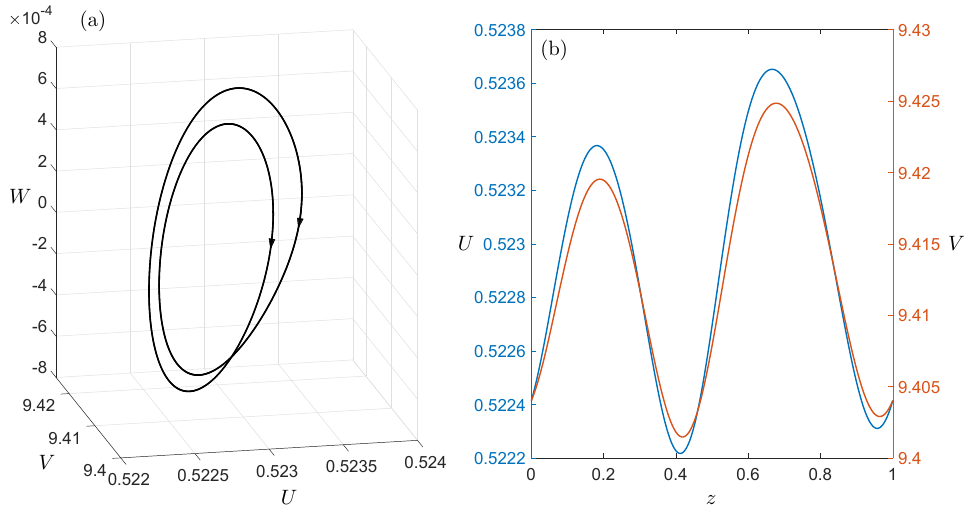}
	\caption{Profile of a wave train. Panel (a) shows a periodic orbit projected onto the $UVW$ space, while panel (b) shows the time series of $U$ and $V$ contained in a (rescaled) period of length 1. Parameter values are the same as in Fig.~\ref{fig:pulsodeonda} except for $d= 1.3107$. }
	\label{fig:trendeonda}
\end{figure}

\section{Local stability analysis}
\label{sec:local}
%%%%%%%%%%%%%%%%%%%%%%%%%%%%%%%%%%%%%%%

System \eqref{sist} has at most five equilibrium points in $\Omega$, which are given by $\mb{p}_0=(0,0,0,0)$, $\mb{p}_m=(m,0,0,0)$, $\mb{p}_1=(1,0,0,0)$, $\mb{p}=(p_u,p_v,0,0)$, and $\mb{q}=(q_u,q_v,0,0)$,
where
\\
\begin{subequations}\label{eq:quvpuv}
\begin{minipage}{0.45\textwidth}
	\begin{align}
	q_u&=\dfrac{bs(1+m)+\sqrt{bs\Delta}}{2bs},\label{eq:qu}
	\\[1.5ex]
	q_v&=\dfrac{(b-g)}{g}q_u,\label{eq:qv}
	\end{align}
\end{minipage}
\hfill
\begin{minipage}{0.45\textwidth}
	\begin{align}
	p_u&=\dfrac{bs(1+m)-\sqrt{bs\Delta}}{2bs},\label{eq:pu}
	\\[1.5ex]
	p_v&=\dfrac{(b-g)}{g}p_u,\label{eq:pv}
	\end{align}
\end{minipage}
\end{subequations}
\\[1.5ex]
provided that $\Delta=bs(m-1)^2-4a(b-g)\geq0$, and $q_u,q_v,p_u,p_v\geq 0$.
Under these conditions, we have the following result on the stability of $\mb{p}_0$, $\mb{p}_m$ and~$\mb{p}_1$:

\begin{pro} \label{lema1}
	Let us consider the quantities
	%\begin{align*}
	$$
		\Delta_m^1:=c^2+4(g-b)m, \qquad \Delta_m^2:=c^2-4d(1-m)m^2s.
		$$
	%\end{align*}
	Then, system \eqref{sist} satisfies the following statements:
	\begin{enumerate}
		\item $\mb{p}_0$ is an unstable non-hyperbolic equilibrium, $\dim(W^u(\mb{p}_0))=2$, and $\dim(W^c(\mb{p}_0))=2$. \label{uno}
		\item If $b<g$, then $\mb{p}_m$ and $\mb{p}_1$ are hyperbolic saddles,  $\dim(W^s(\mb{p}_m))=1$, $\dim(W^u(\mb{p}_m))=3$, $\dim(W^s(\mb{p}_1))=2$, and $\dim(W^u(\mb{p}_1))=2$. \label{dos}
		\item If $b>g$, then $\mb{p}_m$ is a hyperbolic repeller, and $\mb{p}_1$ is a hyperbolic saddle, $\dim(W^s(\mb{p}_1))=1$, and $\dim(W^u(\mb{p}_1))=3$. In addition, if $\Delta_m^1>0$ and $\Delta_m^2>0$, then $\mb{p}_m$ is a repelling node. \label{tres}
	\end{enumerate}
\end{pro}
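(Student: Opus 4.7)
My plan is to linearise \eqref{sist} at each of the three axial equilibria and read off the dimensions of $W^s$, $W^u$ and $W^c$ directly from the resulting eigenvalues. The structural fact that makes the computation uniform is that all three equilibria have $V=0$, while the $R$-equation is divisible by $V$ in its nonlinear part. Consequently, after reordering the coordinates as $(V,R,U,W)$, the Jacobian at any point of the form $(U^*,0,0,0)$ becomes block lower-triangular with $2\times 2$ diagonal blocks
\begin{equation*}
	A(U^*)=\begin{pmatrix} 0 & 1 \\ (g-b)U^* & c \end{pmatrix}, \qquad
	B(U^*)=\begin{pmatrix} 0 & 1 \\ \varphi(U^*)/d & c/d \end{pmatrix},
\end{equation*}
where $\varphi(U^*)\coloneqq \partial_U[-sU(U-m)(1-U)(U+V)+aUV]\bigr|_{(U^*,0)}$. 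Hence the characteristic polynomial factors as $\chi_{A(U^*)}\cdot \chi_{B(U^*)}$, and its four roots split into two Vieta pairs with traces $c>0$ and $c/d>0$ and products $(b-g)U^*$ and $-\varphi(U^*)/d$, respectively.

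For part (1), both $\varphi(0)$ and $(g-b)\cdot 0$ vanish, so the two blocks have eigenvalue pairs $\{0,c\}$ and $\{0,c/d\}$, giving two positive and two zero eigenvalues. This immediately yields $\dim W^u(\mb{p}_0)=\dim W^c(\mb{p}_0)=2$ and makes $\mb{p}_0$ non-hyperbolic and unstable (by virtue of its two-dimensional unstable direction). For parts (2) and (3) I would first record the short calculations $\varphi(m)=-sm^2(1-m)<0$ and $\varphi(1)=s(1-m)>0$. The $B$-block then has positive product at $\mb{p}_m$ and positive trace (so both eigenvalues have positive real part), while at $\mb{p}_1$ the product $-s(1-m)/d<0$ forces one positive and one negative real eigenvalue. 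The $A$-block is controlled purely by the sign of $b-g$: if $b<g$ then $(b-g)U^*<0$ and the pair consists of one positive and one negative eigenvalue at each equilibrium; if $b>g$ then the product is positive and the trace is $c>0$, so both eigenvalues have positive real part. Summing the counts at $\mb{p}_m$ and $\mb{p}_1$ reproduces exactly the dimensions claimed in (2) and (3). For the repelling-node sub-claim, I would note that $\Delta_m^1$ and $\Delta_m^2$ are, up to positive multiplicative factors, the discriminants of $\chi_{A(m)}$ and $\chi_{B(m)}$, so $\Delta_m^1,\Delta_m^2>0$ combined with $b>g$ forces the four eigenvalues to be real and positive.

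The only real source of difficulty I foresee is bookkeeping: verifying that the chosen coordinate reordering really makes $J(U^*,0,0,0)$ block lower-triangular (no stray cross-terms above the diagonal), and checking that the discriminants arising naturally from the Vieta analysis coincide with $\Delta_m^1$ and $\Delta_m^2$ as defined in the statement. The dynamical input is otherwise minimal: only the stable manifold theorem, to convert eigenspace dimensions into invariant-manifold dimensions, together with Vieta's formulas.
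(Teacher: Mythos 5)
Your proposal is correct and follows essentially the same route as the paper: linearize at each of the three axial equilibria and classify them by the signs of the eigenvalues, then invoke the stable/center manifold theorems. The only difference is organizational — you exploit the block lower-triangular structure of the Jacobian in the $(V,R,U,W)$ ordering and argue via traces, products and discriminants of the two quadratic factors, whereas the paper writes the four eigenvalues explicitly as $\bigl(c\pm\sqrt{\,\cdot\,}\bigr)/2$ and $\bigl(c\pm\sqrt{\,\cdot\,}\bigr)/(2d)$ and compares the radicands to $c^2$; your two quadratic factors are precisely the ones whose roots the paper lists, and their discriminants equal $\Delta_m^1$ and $\Delta_m^2/d^2$, so the bookkeeping you flag as the only risk does check out.
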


\begin{proof} If we denote the vector field \eqref{sist} by $X$,
	its Jacobian matrix evaluated at the points $\mb{p}_0$, $\mb{p}_m$ and $\mb{p}_1$ is given, respectively, by:
	$$
		DX(\mb{p}_0)=\begin{pmatrix}
		0 && 0 && 1 && 0
		\\
		0 && 0 && 0 && 1
		\\
		0 && 0 && \dfrac{c}{d} && 0
		\\
		0 && 0 && 0 && c
		\end{pmatrix},
		\hspace{5mm}
		DX(\mb{p}_m)=\begin{pmatrix}
		0 && 0 && 1 && 0
		\\
		0 && 0 && 0 && 1
		\\
		-\dfrac{(1-m)m^2s}{d} && \dfrac{am}{d} && \dfrac{c}{d} && 0
		\\
		0 && (g-b)m && 0 && c
		\end{pmatrix},
		$$
		$${\rm and}\  \
		DX(\mb{p}_1)=\begin{pmatrix}
		0 && 0 && 1 && 0
		\\
		0 && 0 && 0 && 1
		\\
		\dfrac{(1-m)s}{d} && \dfrac{a}{d} && \dfrac{c}{d} && 0
		\\
		0 && g-b && 0 && c
		\end{pmatrix}.
	$$
	Denoting by $\lambda_i^j$ the $i$-th eigenvalue of the equilibrium $\mb{p}_j$, $j\in\{0,1,m\}$, then we see that
	\begin{align*}
		\begin{cases}
		\lambda_{1,2}^0&=0,
		\\[1.5ex]
		\lambda_3^0&=\dfrac{c}{d}>0,
		\\[1.5ex]
		\lambda_4^0&=c>0.
		\end{cases} \quad
		\begin{cases}
		\lambda_{1,2}^m&=\dfrac{c\pm \sqrt{\Delta_m^1}}{2},
		\\[1.5ex]
		\lambda_{3,4}^m&=\dfrac{c\pm \sqrt{\Delta_m^2}}{2d},
		\end{cases} \quad
		\begin{cases}
		\lambda_{1,2}^1&=\dfrac{c\pm \sqrt{c^2+4(g-b)}}{2},
		\\[1.5ex]
		\lambda_{3,4}^1&=\dfrac{c\pm \sqrt{c^2+4d(1-m)s}}{2d}.
		\end{cases}
	\end{align*}
	Since $DX(\mb{p}_0)$ has two zero and two positive eigenvalues, then $\mb{p}_0$ is an unstable non-hyperbolic equilibrium. The remaining statements are direct consequences of the Hartman-Grobman theorem and the stable manifold theorem~\cite{guckenheimer}. As for $\mb{p}_m$, since $0<m<1$, then $\Delta_m^2=c^2-4d(1-m)m^2s<c^2$. Furthermore, if $b<g$, then $\Delta_m^1=c^2+4(g-b)m>c^2$. Thus we have that
$$\lambda_1^m=\dfrac{c+\sqrt{\Delta_m^1}}{2}>0,\,\quad \lambda_2^m=\dfrac{c-\sqrt{\Delta_m^1}}{2}<0,\,\quad \text{Re}(\lambda_{3,4}^m)>0,$$
	which implies the desired result. On the other hand, if $b>g$, then $\Delta_m^1<c^2$, which implies that $\lambda_2^m>0$. In the particular case that $b>g$, $\Delta_m^1>0$ and $\Delta_m^2>0$, the eigenvalues $\lambda_{3,4}^m$ of $DX(\mb{p}_m)$ become real and positive and, hence, $\mb{p}_m$ is a repelling node. Similarly, a sign analysis of the eigenvalues of $DX(\mb{p}_1)$ reveals the stability of $\mb{p}_1$ and the dimensions of its invariant manifolds. $\qed$
\end{proof}

\begin{pro}
    If $b>g$ and $\Delta>0$, then both equilibria $\mb{p}$ and $\mb{q}$ of \eqref{sist} are in the domain~$\Omega$.
\end{pro}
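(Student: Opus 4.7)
The plan is to reduce the claim to a sign check on each coordinate of $\mb{p}$ and $\mb{q}$. Since both equilibria have $W$- and $R$-coordinates equal to zero, membership in $\Omega=\{U\geq 0,\,V\geq 0\}$ amounts to verifying $q_u,q_v,p_u,p_v\geq 0$. The hypothesis $\Delta>0$ guarantees that $\sqrt{bs\Delta}$ is a well-defined positive real, and the parameter constraints in \eqref{param} give $b,s,a>0$ and $m\in(0,1)$, so every expression below is real.

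For $\mb{q}$, I expect a one-line argument: $bs(1+m)>0$ and $\sqrt{bs\Delta}>0$ make the numerator of $q_u$ in \eqref{eq:qu} positive, hence $q_u>0$. The hypothesis $b>g$ together with $g>0$ then makes the prefactor in \eqref{eq:qv} positive, so $q_v=\tfrac{b-g}{g}q_u>0$. For $\mb{p}$ the only subtle step is $p_u$, whose numerator $bs(1+m)-\sqrt{bs\Delta}$ has a sign that is not a priori determined. I would square and compare: nonnegativity of $p_u$ is equivalent to $b^2s^2(1+m)^2\geq bs\Delta$, i.e.\ $bs(1+m)^2 \geq bs(m-1)^2 - 4a(b-g)$, which rearranges to
\[
bs\bigl[(1+m)^2-(m-1)^2\bigr]+4a(b-g)\geq 0.
\]
Using the identity $(1+m)^2-(m-1)^2=4m$, this collapses to $4bsm+4a(b-g)\geq 0$, which is manifestly true because both summands are nonnegative under the hypotheses. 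Taking square roots (both sides are nonnegative) gives $bs(1+m)\geq\sqrt{bs\Delta}$, so $p_u\geq 0$, and then $p_v=\tfrac{b-g}{g}p_u\geq 0$ follows as before.

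The only step requiring any algebra is the squaring comparison in the $p_u$ case; everything else is direct sign checking. I do not anticipate a real obstacle — the proof is essentially bookkeeping once the identity $(1+m)^2-(m-1)^2=4m$ is noticed, and this identity is precisely what makes the combination of $b>g$ and the positivity of $b,s,m,a$ sufficient to force the ``smaller'' root $p_u$ to remain in the nonnegative quadrant.
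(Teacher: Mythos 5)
Your proposal is correct and is essentially the paper's own argument run in reverse: the paper starts from $bsm>-a(b-g)$ and chains equivalences up to $bs(1+m)>\sqrt{bs\Delta}$, while you square the target inequality and reduce it to $4bsm+4a(b-g)\geq 0$ via the same identity $(1+m)^2-(m-1)^2=4m$. The sign checks for $q_u$, $q_v$, $p_v$ match the paper's as well, so there is nothing to add.
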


\begin{proof}
    	It is immediate to see that if $\Delta>0$, then $\mb{p}$ and $\mb{q}$ exist and are different. To see that $\mb{p},\mb{q}\in \Omega$, note that if $b>g$ and $\Delta>0$, then
	\begin{align*}
		bsm &>-a(b-g)
		\\[1.5ex]
		\Leftrightarrow 2bsm&> -4a(b-g)-2bsm
		\\[1.5ex]
		\Leftrightarrow bs(m^2+2m+1)&> -4a(b-g)+bs(m^2-2m+1)
		\\[1.5ex]
		\Leftrightarrow b^2s^2(1+m)^2&> bs\Delta
		\\[1.5ex]
		\Leftrightarrow bs(1+m)&> \sqrt{bs\Delta}
		\\[1.5ex]
		\Leftrightarrow q_u>p_u&> 0.
	\end{align*}
	Finally, since $q_v=(b-g)q_u/g$ and $p_v=(b-g)p_v/g$, the result follows.~$\qed$
\end{proof}

To perform a stability analysis of the equilibria $\mb{p}$ and $\mb{q}$ by standard methods is a challenging task. Indeed, the Jacobian matrix of $X$ evaluated at $\mb{p}$ and $\mb{q}$ are given, respectively, by
\begin{align*}
	DX(\mb{p})=\begin{pmatrix}
		0 && 0 && 1 && 0
		\\
		0 && 0 && 0 && 1
		\\
		a_{31} && a_{32} && \dfrac{c}{d} && 0
		\\
		a_{41} && a_{42} && 0 && c
	\end{pmatrix}
\quad {\rm and} \quad 
    DX(\mb{q})=\begin{pmatrix}
    0 && 0 && 1 && 0
    \\
    0 && 0 && 0 && 1
    \\
    b_{31} && b_{32} && \dfrac{c}{d} && 0
    \\
    b_{41} && b_{42} && 0 && c
    \end{pmatrix},
\end{align*}
where
\begin{align*}
	a_{31}&=\dfrac{2 a \left(b g+g^2-2 b^2\right)+b \left(b (m-1)^2 s-(m+1) \sqrt{b s \Delta}\right)}{2 b d g}p_u,\\
	b_{31}&=\dfrac{2 a \left(b g+g^2-2 b^2\right)+b \left(b (m-1)^2 s+(m+1) \sqrt{b s \Delta}\right)}{2 b d g}q_u,
\end{align*}	
	and
\begin{gather*}
	a_{32}=\dfrac{a g}{b d}p_u, \, \quad
	a_{41}=-(b-g)p_v, \,
	a_{42}=(b-g)p_u.\\
    b_{32}=\dfrac{a g}{b d}q_u,\, \quad
    b_{41}=-(b-g)q_v,\,
    b_{42}=(b-g)q_u.
\end{gather*}

Hence, the computation of analytic expressions for the eigenvalues of $DX(\mb{p})$ and $DX(\mb{q})$ turns out to be a cumbersome goal. However, direct inspection of equilibrium coordinates reveals evidence of some local bifurcations. If $b=g$, then from \eqref{eq:quvpuv} we have that $q_v=p_v=0$, $q_u=1$, and $p_u=m$, since
\begin{gather*}
		\dfrac{bs(1+m)\pm \sqrt{b^2s^2(m-1)^2}}{2bs}
%		=\dfrac{bs(1+m)\pm  bs(1-m)}{2bs}
	=\dfrac{1+m\pm  (1-m)}{2}.
\end{gather*}
	This implies that $\mb{q}=\mb{p}_1$ and $\mb{p}=\mb{p}_m$. 	Since, according to Lemma \ref{lema1}, there is a stability change for $\mb{p}_1$ and $\mb{p}_m$ when $b=g$, this is an indication of a transcritical bifurcation. The equilibrium points $\mb{q}$ and $\mb{p}_1$ collide and interchange their stability when $b=g$; a similar statement follows for $\mb{p}$ and $\mb{p}_m$. 
	On the other hand, 
	if $\Delta=0$, then $q_u=p_u$ and $q_v=p_v$, so that $\mb{q}=\mb{p}$. Since these two equilibria exist only if $\Delta>0$, this is evidence of a saddle-node (or fold) bifurcation of equilibrium points. Formal proofs for these statements require the reduction of \eqref{sist} into a one-parameter family of center manifolds on each case, followed by verification of certain genericity conditions; we refer to~\cite{guckenheimer,kuznetsov} for more details. However, we opt to omit these proofs in favor of a focus on the analysis of global bifurcations in \eqref{sist} and the emergence of traveling waves in \eqref{edp}.

%%%%%%%%%%%%%%%%%%%%%%%%%%%%%%%%%%%%%%%
\section{Bifurcation analysis}
\label{sec:bif}
%%%%%%%%%%%%%%%%%%%%%%%%%%%%%%%%%%%%%%%

In this section, we present a bifurcation analysis of \eqref{sist} performed with the standard continuation
package {\sc Auto}. As a starting point, we consider parameters $a=24$, $b=19$, $g=1$ and $s=100$ fixed throughout this section and let $d$ and $m$ to vary. The fixed values of $a,b,g$ and $s$ correspond to those in \cite{aguirre} 
after the transformation \eqref{param}.  As in general the wave speed is a continuous function of the  parameters system, i.e., $c=c(d,s,b,g,a,m)$, for the purpose of simplification we take the initial value $c = 1$ for the wave speed in \eqref{sist}. This approach can be thought of as an exploratory phase in which one navigates the possible preimages of $c(d,s,b,g,a,m)=1$ in parameter space that allow solutions in the traveling frame of reference moving at speed $c=1$. Later in \S\ref{sec:cd-plane}, we let $c$ to vary in order to capture the existence and properties of the wave solutions by means of a wider range of preimage values of $c$.

%Moreover, we take the initial value $c=1$ for the wave speed in \eqref{sist} for a first analysis. Later in \S\ref{sec:cd-plane}, we let $c$ to vary in order to capture its influence on the existence and properties of the wave solutions.

\begin{figure}
	\centering
	\includegraphics[trim = 10mm 0mm 10mm 0mm, clip,scale=1]{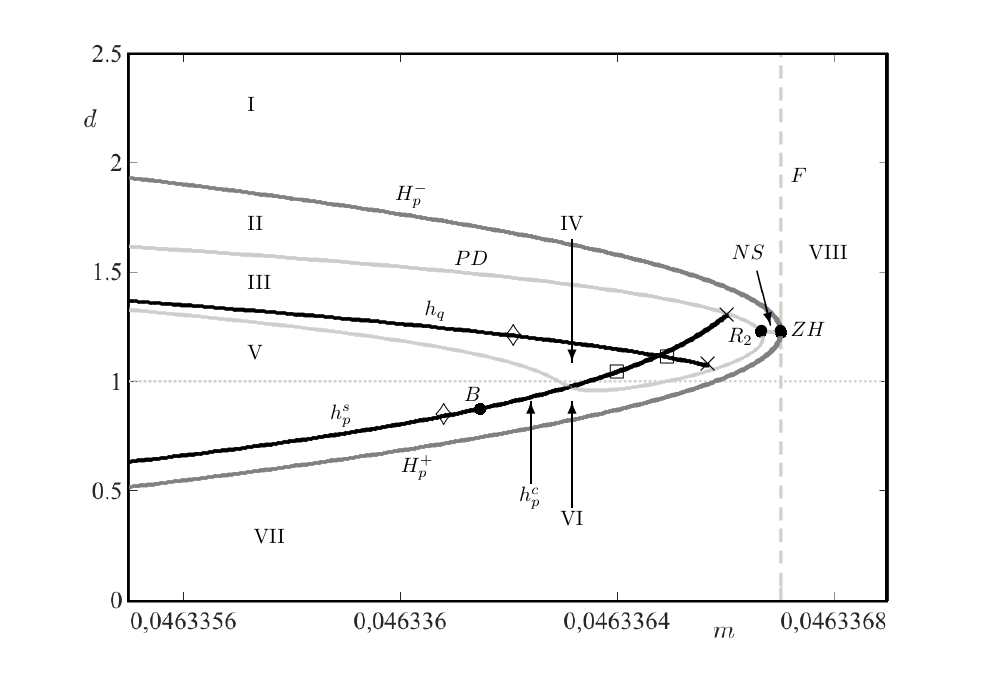}
	\caption{Bifurcation diagram of \eqref{sist} in the $(m.d)$-plane space. 
	Parameter values are the same as in Fig.~\ref{fig:pulsodeonda}.}
	\label{fig:bifurcacion}
\end{figure}

The resulting bifurcation scenario in the $(m,d)$-plane is shown in Fig.~\ref{fig:bifurcacion}. 
Of special importance for us are the curves $h_p$ and $h_q$ which represent homoclinic bifurcations to $\mb{p}$ and $\mb{q}$, respectively.  We will address the technical details and consequences of these homoclinic bifurcations later in \S\ref{sec:homoclinic}. For the moment, it suffices to say that $h_p$ is divided into two segments (labeled as $h_p^c$ and $h_p^s$, respectively) by a codimension two Belyakov homoclinic point, labeled as $B$~\cite{shilnikov,bel84}. 
The {right-hand} side endpoints of both $h_p^c$ and $h_q$ (marked with $\times$) correspond to the last points where we could obtain convergence of the computed solutions with {\sc Auto}. 

 Fig.~\ref{fig:bifurcacion} also shows a curve of Hopf bifurcation at the equilibrium $\mb{p}$. This bifurcation curve is divided into two segments. The first one is a segment of supercritical Hopf bifurcation, labeled as $H_p^-$; the other one is a segment of subcritical Hopf bifurcation, labeled as $H_p^+$.  The separation between both sides of the Hopf curve occurs at a codimension two Zero-Hopf bifurcation point (labeled as $ZH$) where the Hopf curve meets a Fold bifurcation curve $F$ in a quadratic tangency.
%The curve $H_p^-$ meets a Fold bifurcation curve $F$ in a quadratic tangency at a codimension two Zero-Hopf bifurcation point, labeled as $ZH$. The separation between the curves $H_p^-$ and $H_p^+$ occurs at a codimension two generalized Hopf bifurcation point located very close to the $ZH$ point; we choose not to label this generalized Hopf point in Fig.~\ref{fig:bifurcacion} to avoid an overlapping with the $ZH$ point.
 The curve labeled as $PD$ corresponds to a period doubling bifurcation, while $NS$ is a Neimark-Sacker (or torus) bifurcation curve. These two curves meet at a codimension two strong resonant point, $R_2$. The horizontal dotted line in Fig.~\ref{fig:bifurcacion} corresponds to $d=1$ (or equivalently, $D_1=D_2$ in \eqref{difusion}). While this line does not represent any bifurcation, it is useful to distinguish the phenomena encountered above it from that which occurs below it. Indeed, one must remember that if $d>1$ (resp. $d<1$), then $\mb{u}$ has a higher (resp. lower) diffusion rate than $\mb{v}$.

The bifurcation curves in Fig.~\ref{fig:bifurcacion} divide the $(m,d)$-plane into the open regions I-VIII. Region VIII is bounded to the left by the curve $F$, while I is delimited to its right by $F$, and below by $H_p^-$. Region II is surrounded by the curves $H_p^-$, $NS$, and $PD$; while region III is enclosed by the curves $PD$, $h_p^+$, and $h_q$. Furthermore, region IV is bounded by the segments $h_q$, $h_p^+$, and $PD$, while region V is surrounded by the curves $PD$, $h_p^-$, and $h_p^+$. Finally, region VI is enclosed by the curves $h_p^-$, $h_p^+$, $PD$, $NS$, and $H_p^+$, while region VII is delimited to the right by the curve $F$ and above by $H_p^+$.

It is relevant to note that the fold curve $F$ corresponds to the equation $\Delta=0$ in \S\ref{sec:local}. Hence, equilibrium points $\mb{p}$ and $\mb{q}$ exist on the {left-hand} side of the $F$ curve (regions I-VII). In particular, if $(m,d)\in{\rm I}$, both equilibria are hyperbolic. If $(m,d)$ passes through the supercritical Hopf bifurcation curve $H_p^-$ from region I towards region II, a limit cycle 
branches out from $\mb{p}$. While the Hopf bifurcation is supercritical, this criticality is restricted only to a suitable two-dimensional center manifold where the bifurcation takes place~\cite{guckenheimer,kuznetsov}; 
the resulting periodic orbit in $\mathbb{R}^4$ is, in fact, of saddle type in region II. The stability properties of this cycle remain unchanged until this orbit undergoes a period doubling bifurcation when $(m,d)\in PD$. This periodic orbit faces a number of further bifurcations (not shown in Fig.~\ref{fig:bifurcacion}) as the point $(m,d)$ moves towards the curve $h_q$ where it gives rise to a homoclinic orbit. We will address this transition again in \S\ref{sec:cycles}. On the other hand, when the point $(m,d)$ crosses the $NS$ curve from region II into region VI, an invariant torus bifurcates as a periodic orbit undergoes a Neimark-Sacker bifurcation.

Our bifurcation diagram in Fig.~\ref{fig:bifurcacion} is just partially complete. Other codimension two strong resonances can be found along the $NS$ bifurcation curve. Although the complete bifurcation diagram near these bifurcation points is yet to be known in its full complexity, one should expect the appearance of chaotic behavior when the point $(m,d)$ is in a neighborhood of the $NS$ curve; for further details, see \cite{kuznetsov}. 
Furthermore, bifurcation theory tells us that there is an infinite number of bifurcation curves in neighborhoods of both points $B$ and $R_2$. However, the full bifurcation picture near each of these points is {not fully known from a theoretical point of view~\cite{kuznetsov}. (We will  address the complex dynamics that emerges due to the point $B$ in \S\ref{sec:homoclinic} below).

%%%%%%%%%%%%%%%%%%%%%%%%%%%%%%%%%%%%%%%
\section{Homoclinic bifurcations, wave pulses, and chaos}
\label{sec:homoclinic}
%%%%%%%%%%%%%%%%%%%%%%%%%%%%%%%%%%%%%%%

\begin{figure}
	\centering
	\includegraphics[width=\textwidth]{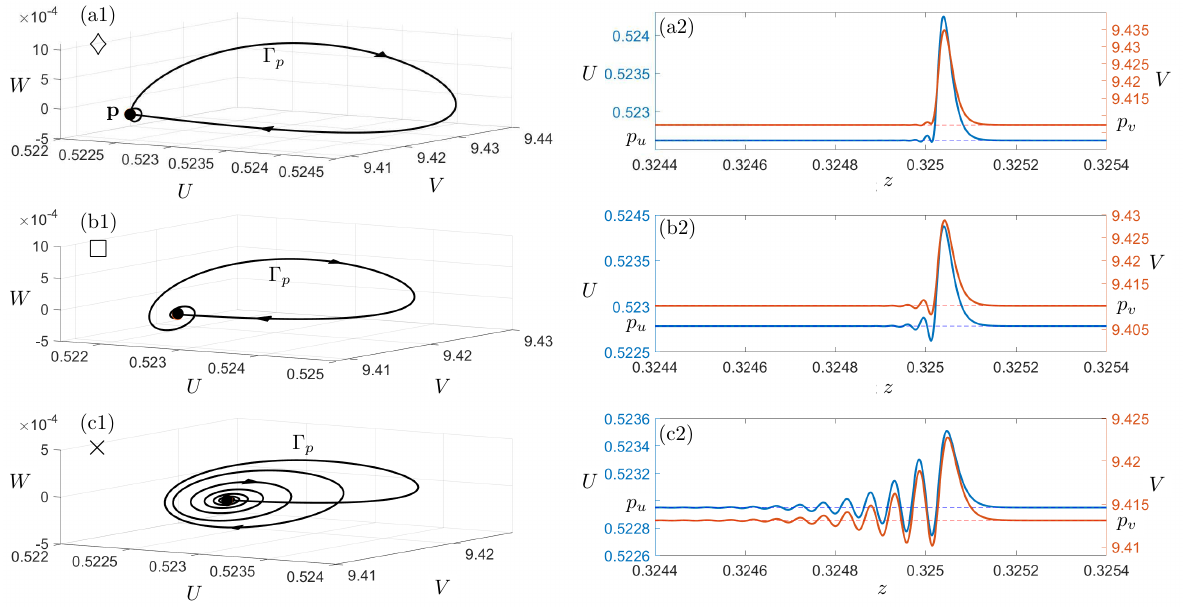}
	\caption{Homoclinic orbits to the equilibrium $\mb{p}$ along the bifurcation curve $h_p$. Parameter values are $(m,d)=(0.0463361,0.8740509)$ in panels (a1)-(a2),  $(m,d)=(0.0463364,1.0390163)$ in panels (b1)-(b2), and $(m,d)=(0.0463366,1.2995479)$ in panels (c1)-(c2). The other parameters values are the same as in Fig.~\ref{fig:pulsodeonda}.}
	\label{fig:hom_p}
\end{figure}

Fig.~\ref{fig:hom_p} shows three different homoclinic orbits to the equilibrium $\mb{p}$ in the left-hand side column, and their corresponding time series in the right-hand side column. The values of parameters $(m,d)$ for each case correspond to those marked as $\diamondsuit,$ $\Box$ and $\times$ along the curve $h_p$ in Fig.~\ref{fig:bifurcacion}. 
In the left-hand side column of Fig.~\ref{fig:hom_p}, the homoclinic trajectories develop a rotational movement near $\mb{p}$ before making a large excursion and returning to $\mb{p}$; see the sequence of panels (a1)-(b1)-(c1). The amplitude of these oscillations increases as the point $(m,d)$ moves to the right along the curve $h_p$. As a result, the corresponding wave pulses in panels (a2)-(b2)-(c2) feature an initial transient with increasingly larger oscillations ---as $(m,d)$ moves to the right along $h_p$--- around the equilibrium values; in each case, this culminates in a large pulse before decaying back to the rest state.
Indeed, the initial pattern of smaller amplitude oscillations of each wave takes most of a long interval of values of $z\in]0,1[$ (i.e, it is a ``slow" build-up in terms of $z$); while the large amplitude pulse occurs in a smaller interval (of order $10^{-4}$) of parameter $z$ (i.e, a ``fast" discharge).
Further, notice that both  state variables $\mb u$ and $\mb v$ tend to increase and decrease simultaneously along any given traveling pulse.

The existence of the homoclinic orbit to $\mb{p}$ implies the presence of chaotic dynamics in \eqref{sist}. Let us now state the main reasons for this claim. For any $(m,d)$ in a neighbourhood of the curve $h_p$, the linearization of \eqref{sist} at the equilibrium $\mb{p}$ has one (stable) eigenvalue $\lambda^s<0$ and three (unstable) eigenvalues $\lambda_{1,2}^u\in\mathbb{C}$, and $\lambda_3^u>0$. In particular, $\lambda_{1,2}^u$ are complex conjugate with positive real part ${\rm Re}(\lambda_{1,2}^u)>0$. The equilibrium $\mb{p}$ is called a {\em saddle-focus}. Fig.~\ref{fig:vpchangedorientation} shows all the possible values of the eigenvalues of $\mb{p}$ (in the complex plane) along the computed segment of the homoclinic bifurcation curve $h_p$. Namely, as parameters $(m,d)$ are allowed to vary along the computed segment of the curve $h_p$ in Fig.~\ref{fig:bifurcacion}, each eigenvalue of $\mb{p}$ traces out a curve segment whose plots are shown in Fig.~\ref{fig:vpchangedorientation}. Among the unstable eigenvalues, the pair $\lambda_{1,2}^u$ are the closest to the imaginary axis ${\rm Re}(\lambda)=0$; hence, we say that $\lambda_{1,2}^u$ are the {\em leading unstable} eigenvalues. In this setting, if we define the so-called {\em saddle quantity} as $\sigma_1=\lambda^s+{\rm Re}(\lambda_{1,2}^u)$, Shilnikov's theorems~\cite{guckenheimer,kuznetsov,shil65,shil70} state that if $\sigma_1>0$, the homoclinic bifurcation is {\em simple} or {\em mild}. In this simple Shilnikov homoclinic bifurcation, a single (repelling) periodic orbit bifurcates from the homoclinic orbit on one side of the curve $h_p$. On the other hand, if $\sigma_1<0$, the homoclinic bifurcation is chaotic and gives rise to a wide range of complicated behavior in phase space. More specifically, one can find horseshoe dynamics in return maps defined in a neighbourhood of the homoclinic orbit. The suspension of the Smale horsehoes forms a hyperbolic invariant chaotic set which contains countably many periodic orbits of saddle-type. The horseshoe dynamics is robust under small parameter perturbations, i.e., the chaotic dynamics persist even when the homoclinic connection is broken; see~\cite{guckenheimer,kuznetsov}. The segments labeled as $h_p^s$ and $h_p^c$ in Fig.~\ref{fig:bifurcacion} correspond to simple and chaotic regimes, respectively, and are separated by the Belyakov point $B$ where $\sigma_1=0$~\cite{shilnikov,bel84}. (Actually, at $(m,d)=B$, we have ${\rm Re}(\lambda^u_{1,2})=|\lambda^s|\approx 0.6654466$). Likewise, in Fig.~\ref{fig:vpchangedorientation}, the segments $h_p^s$ and $h_p^c$ along the curves for $\lambda_{1,2}^u$ correspond to $\sigma_1>0$ (simple) and $\sigma_1<0$ (chaotic), respectively, and are separated by the point labeled as $B$ where $\sigma_1=0$. This same Belyakov transition is shown for the corresponding $\lambda^s$ value as well (and is also labeled as $B$).

	\begin{figure}
	\centering
	\includegraphics[scale=1]{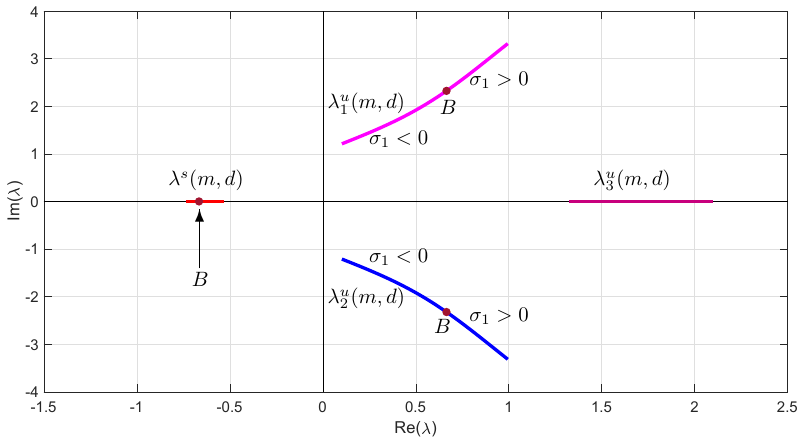}
	\caption{Eigenvalues of the Jacobian matrix $DX(\mb{p})$ continued for every $(m,d)$ along the homoclinic bifurcation curve $h_p$ in Fig.~\ref{fig:bifurcacion}.}
	\label{fig:vpchangedorientation}
\end{figure}

The bifurcation picture near the curve $h_p$ in Fig.~\ref{fig:bifurcacion} is just a partial representation of the full complexity one may encounter in this region of parameter space. Indeed, the saddle periodic orbits associated with the invariant chaotic set may also undergo further bifurcations such as period-doubling and torus bifurcations~\cite{guckenheimer,kuznetsov}. Moreover, the presence of the chaotic $h_p^c$ bifurcation and that of the Belyakov point $B$ imply a very complicated structure (not shown) of infinitely many saddle-node and period-doubling bifurcations of periodic orbits as well as of subsidiary $n$-homoclinic orbits. Fig.~\ref{fig:2homp} shows a 2-homoclinic orbit to $\mb{p}$ (in panel (a1)) and a 4-homoclinic orbit to $\mb{p}$ (in panel (b1)), as well as their corresponding time series in panels (a2) and (b2), respectively.
In general, $n$-homoclinic orbits are characterized by making $n-1$ close passes near the equilibrium before closing up to form the connection; see panels (a1) and (b1). As a consequence, the corresponding traveling wave develops $n$ pulses before setting down to the steady state values; see the 2-pulse and 4-pulse waves in panels (a1) and (b1), respectively.
Moreover, for each of these subsidiary $n$-homoclinic orbits, the system exhibits horseshoe dynamics and chaos as in the original homoclinic scenario.

	\begin{figure}
	\centering
	\includegraphics[width=\textwidth]{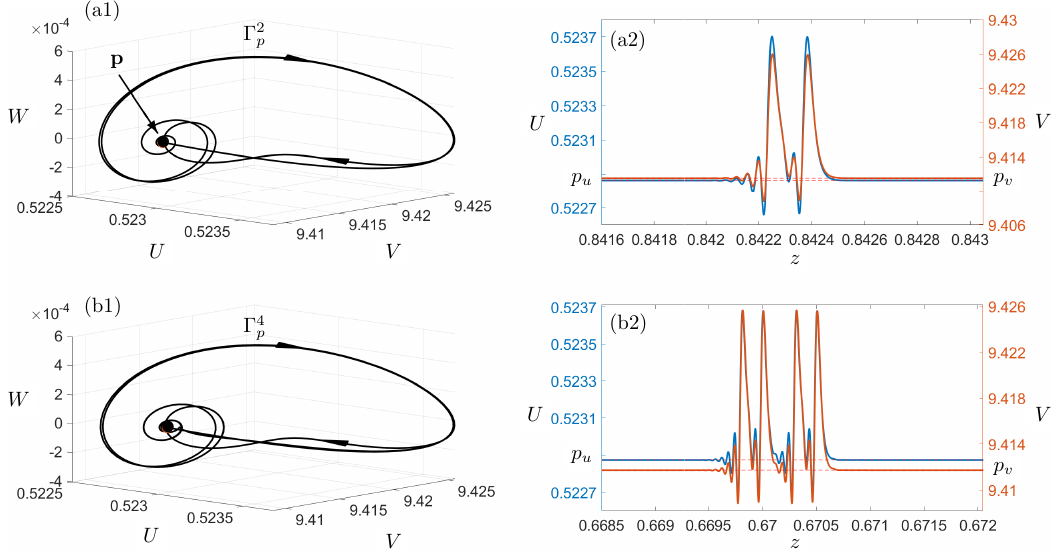}
	\caption{Panel (a1) shows $\Gamma_p^2$, the 2-homoclinic orbit to $\mb{p}$, while panel (a2) shows its time series of $U$ and $V$ associated with $\Gamma_p^2$. Similarly, panels (b1)-(b2) show a 2-homoclinic orbit and its associated 4-pulse wave, respectively. Parameter values are the same as in Fig.~\ref{fig:pulsodeonda} except for $(m,d)=(0.0463361, 1.1533894)$ in panels (a1)-(a2) and $(m,d)=(0.0463365, 1.1683875)$ in panels (b1)-(b2).}
	\label{fig:2homp}
\end{figure}

	\begin{figure}
	\centering
	\includegraphics[scale=1]{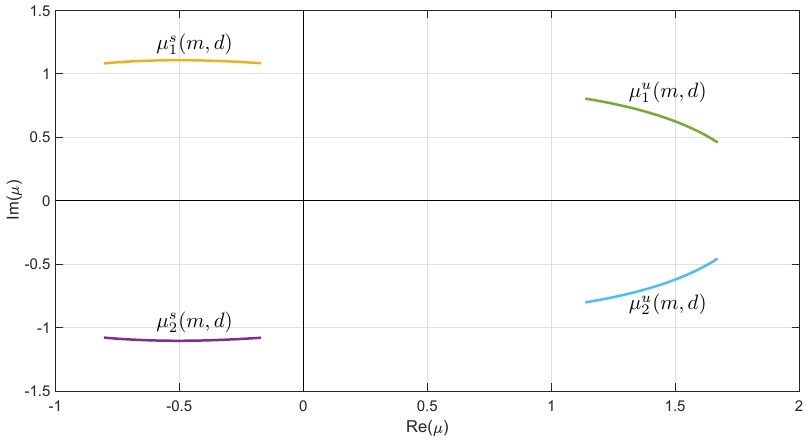}
	\caption{Eigenvalues $\mu_1^s(m,d)$, $\mu_2^s(m,d)$, $\mu_3^u(m,d)$ and $\mu_4^u(m,d)$ of $DX(\mb{q})$ continued for every $(m,d)$ along the homoclinic bifurcation curve $h_q$ in Fig.~\ref{fig:bifurcacion}.}
	\label{fig:vphomps}
\end{figure}

As for the homoclinic bifurcation $h_q$ at the equilibrium $\mb{q}$, the associated Jacobian matrix of \eqref{sist} at $\mb{q}$ has a pair of complex-conjugate stable eigenvalues $\mu_{1,2}^s\in\mathbb{C}$ and a pair of complex-conjugate unstable eigenvalues $\mu_{1,2}^u\in\mathbb{C}$, with ${\rm Re}(\mu_{1,2}^s)<0$ and ${\rm Re}(\mu_{1,2}^u)>0$. The structure of the eigenvalues of $\mb{q}$ as a function of $(m,d)\in h_q$ is shown in Fig.~\ref{fig:vphomps}. We say that $\mb{q}$ is a {\em focus-focus} or {\em bi-focus}. The resulting homoclinic orbit $\Gamma_q$ features a spiral-type convergence to $\mb{q}$ as $z\rightarrow\pm\infty$.
Fig.~\ref{fig:hom_q} shows three different examples of such homoclinic orbit in the left column, and their respective time series in the right one. The values of parameters $(m,d)$ for each case correspond to those marked as $\diamondsuit,$ $\Box$ and $\times$ along the curve $h_q$ in Fig.~\ref{fig:bifurcacion}. 
 In the {left-hand side} column of Fig.~\ref{fig:hom_q}, the amplitude of the oscillations increases as the point $(m,d)$ moves to the right along the curve $h_q$. As a result, the corresponding wave pulses in panels (a2)-(b2)-(c2) develop more oscillations ---as $(m,d)$ moves to the right along $h_q$--- before converging to the equilibrium values as $z\rightarrow\infty$. The spirals and oscillations that are visible in panels (a1)-(b1)-(c1) and in panels (a2)-(b2)-(c2), respectively, are associated with the stable eigenvalues $\mu_{1,2}^s$ of $\mb{q}$. There is another set of oscillations as $z\rightarrow-\infty$ which are associated with the unstable eigenvalues $\mu_{1,2}^u$; however, since ${\rm Im}(\mu_{1,2}^u)<{\rm Im}(\mu_{1,2}^s)$ (see Fig.~\ref{fig:vphomps} again), these spirals are relatively less pronounced and hard to see in Fig.~\ref{fig:hom_q}.
 Nevertheless, like the case of the homoclinic orbit to $\mb{p}$, here both $\mb u$ and $\mb v$ tend to increase and decrease simultaneously along any given traveling pulse.

	\begin{figure}
	\includegraphics[trim = 20mm 0mm 11mm 0mm, clip, width=\textwidth]{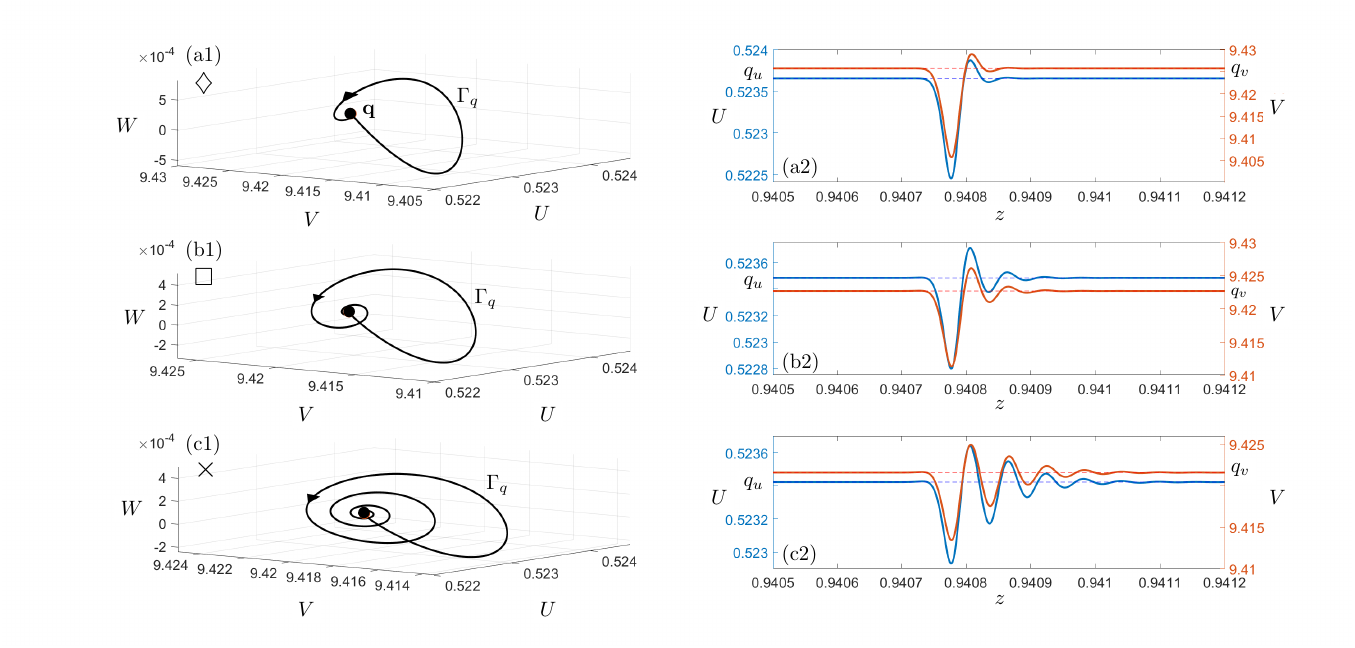}
	\caption{Homoclinic orbits to the equilibrium $\mb{q}$ along the bifurcation curve $h_q$. Parameter values are $(m,d)=(0.0463362,1.2091095)$ in panels (a1)-(a2),  $(m,d)=(0.0463365,1.1092036)$ in panels (b1)-(b2), and $(m,d)=(0.0463366,1.0724625)$ in panels (c1)-(c2). The other parameters values are the same as in Fig.~\ref{fig:pulsodeonda}.}
	\label{fig:hom_q}
\end{figure}

The homoclinic bifurcation at the focus-focus equilibrium $\mb{q}$ induces chaotic dynamics for every $(m,d)\in h_q$. Indeed, the presence of the homoclinic orbit $\Gamma_q$ to a focus-focus equilibrium is accompanied by horseshoe dynamics in cross sections near $\Gamma_q$ and, hence, an infinite number of saddle periodic orbits in a neighbourhood of $\Gamma_q$~\cite{kuznetsov,shilnikovruso}. Furthermore, 
in this setting, the {\em saddle quantity} is defined as $\sigma_2={\rm Re}(\mu_{1,2}^s)+{\rm Re}(\mu_{1,2}^u)$. Since $\sigma_2>0$ for every $(m,d)\in h_q$, it follows that there are no stable periodic orbits near $\Gamma_q$~\cite{foco-foco,sandstede}. 

In sum, any solution in a neighborhood of either $\Gamma_p$ (in the chaotic case) or $\Gamma_q$ tends to behave erratically and presents sensitive dependence to initial conditions. The corresponding orbit in the four-dimensional phase space of \eqref{sist} spends a long transient visiting a strange hyperbolic invariant set before converging to an attractor. Hence, any bounded solution of \eqref{sist} passing near either $\Gamma_p$ (in the $h_p^c$ side of the bifurcation) or $\Gamma_q$ is associated with a chaotic traveling wave~\cite{shilnikovruso}.

Fig.~\ref{fig:doblehom}(a) shows both homoclinic orbits $\Gamma_p$ and $\Gamma_q$ coexisting in phase space, while Fig.~\ref{fig:doblehom}(b1) and Fig.~\ref{fig:doblehom}(b2)  show the time series of $U$ and $V$ associated with either trajectory. This special configuration occurs when the bifurcation curves $h_p^c$ and $h_q$ cross each other at $(m,d)\approx(0.046336476, 1.11668)$; see the bifurcation diagram of Fig.~\ref{fig:bifurcacion}.
While this intersection point is not a new bifurcation, at these parameter values both classes of homoclinic orbits, $\Gamma_q$ and $\Gamma_p$, coexist in phase space. Moreover, one obtains the coexistence of both chaotic invariant sets (each associated with one of the homoclinic trajectories) and, hence,  the corresponding erratic behavior and sensitive dependence on initial conditions of nearby solutions.

	%\smartqed

	\begin{figure}
	\includegraphics[width=\textwidth]{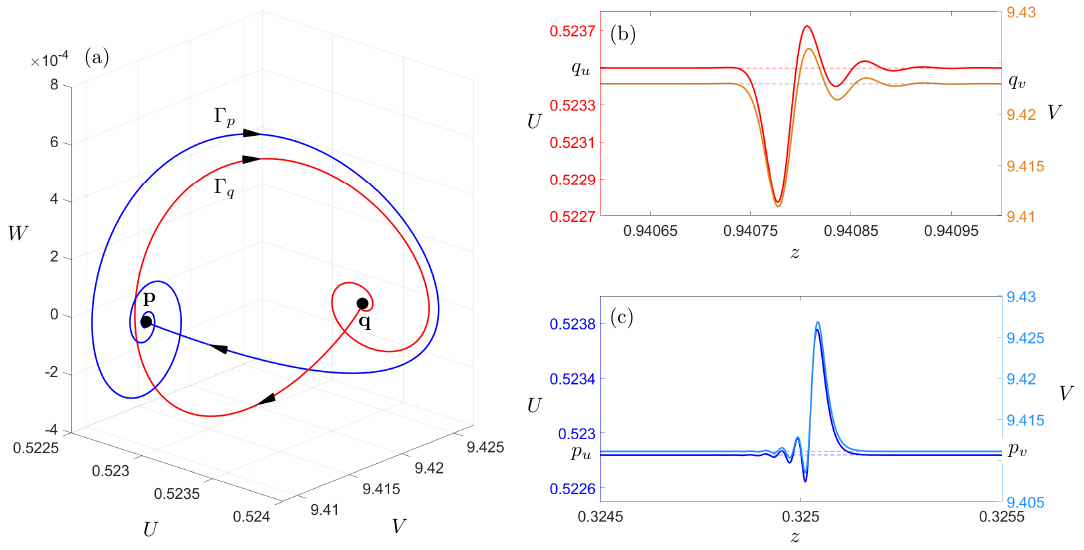}
	\caption{Panel (a) shows a projection of $\Gamma_s$ and $\Gamma_p$ onto the $UVW$ space, when $(m,d)\approx(0.046336476, 1.11668)\in  h_p^c\cap h_q$. Meanwhile, panel (b1) (resp. (b2)) shows the time series of $U$ and $V$ rendered in different color tones, associated with $\Gamma_s$ (resp. $\Gamma_p$). The other parameter values are the same as in Fig.~\ref{fig:pulsodeonda}.}
	\label{fig:doblehom}
\end{figure}

%%%%%%%%%%%%%%%%%%%%%%%%%%%%%%%%%%%%%%%
\section{Periodic orbits and wave trains}
\label{sec:cycles}
%%%%%%%%%%%%%%%%%%%%%%%%%%%%%%%%%%%%%%%

In this section we study the limit cycles existing in \eqref{sist}, their bifurcations, and their consequences for the nature of wave trains.

%%%%%%%%%%%%%%%%%%%%%%%%%%%%%%%%%%%%%%%
\subsection{Period doubling phenomena}
\label{sec:cycles-pd}
%%%%%%%%%%%%%%%%%%%%%%%%%%%%%%%%%%%%%%%

	\begin{figure}
	\includegraphics[trim = 20mm 0mm 18mm 0mm, clip,width=\textwidth]{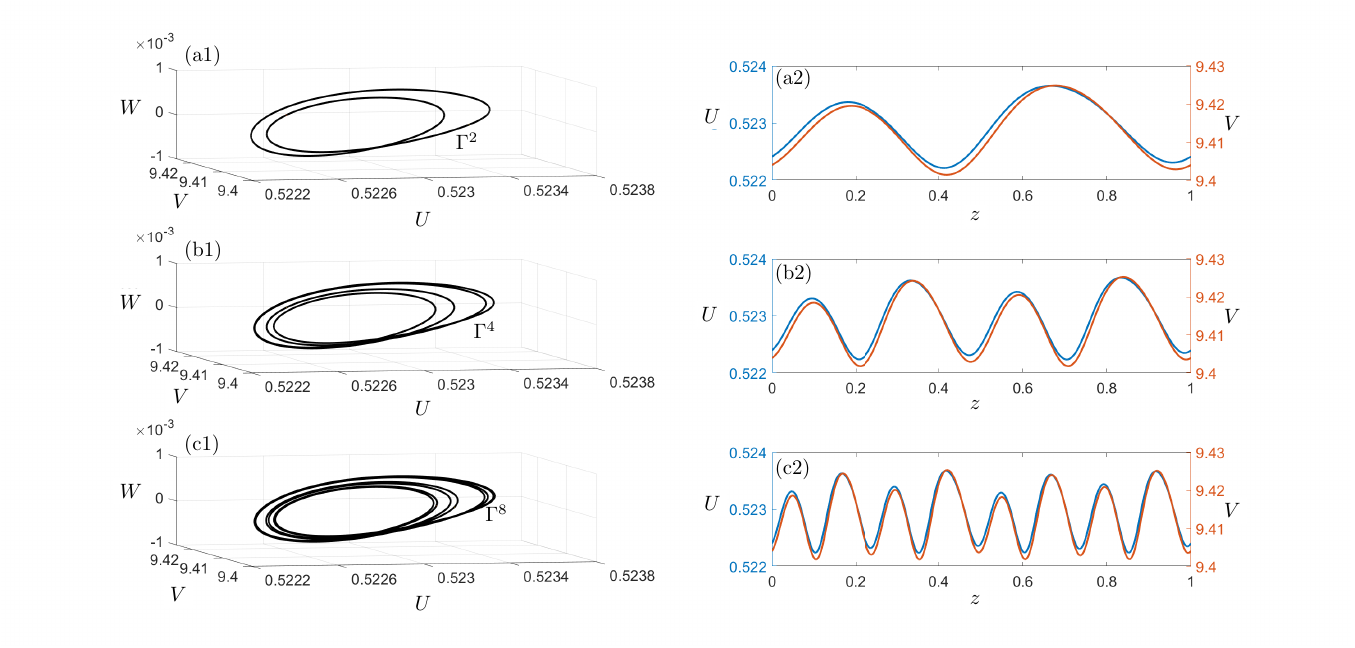}
	\caption{The different periodic orbits (panels (a1), (b1) and (c1)) and associated wave trains (panels (a2), (b2) and (c2))
	emerging from successive period doubling bifurcations. While the periods in the time series are uniformly rescaled to 1 for computational purposes, the actual periods of the cycles are $T=12.1829$ for $\Gamma^2$, $T=24.3659$ for $\Gamma^4$, and $T=48.7318$ for $\Gamma^8$. Parameter values are $d=1.469369$ (in panels (a)), $d=1.4607309$ (in panels (b)), and $d=1.4590971$ (in panels (c)), with $m=0.0463362$ fixed. The other parameters as are as in Fig.~\ref{fig:pulsodeonda}.}
	\label{fig:allperiods}
\end{figure}

Let us consider the periodic orbit $\Gamma$ which originates at the supercritical Hopf bifurcation $H_p^-$ and track its successive bifurcations as parameter $d$ is decreased and $m=0.0463358$ remains fixed. When $(m,d)$ crosses the $PD$ curve from region II to III, the cycle $\Gamma$ undergoes a period doubling bifurcation. As $(m,d)$ enters region III, $\Gamma$ changes its stability and a secondary limit cycle $\Gamma^2$ appears with approximately twice the period of $\Gamma$. As parameter $d$ is further decreased, additional period doubling events occur (not shown in Fig.~\ref{fig:bifurcacion}). This is illustrated in Fig.~\ref{fig:allperiods}. Periodic orbits $\Gamma^2$, $\Gamma^4$ and $\Gamma^8$ of periods 2, 4, and 8 times that of $\Gamma$, respectively, are shown in panels (a1)-(b1)-(c1). Panels (a2)-(b2)-(c2) show one period of the corresponding time series of $U$ and $V$. Here, the actual periods of the solutions are rescaled to $T=1$ for visualization and computational purposes~\cite{doe-book}. As a consequence, as parameter $d$ decreases and system \eqref{sist} undergoes this sequence of period doubling bifurcations, the associated wave trains in panels (a2)-(b2)-(c2) display periodic  patterns with doubling periods.

%%%%%%%%%%%%%%%%%%%%%%%%%%%%%%%%%%%%%%%
\subsection{Transition from wave trains to wave pulses}
\label{sec:cycles-shilnikov}
%%%%%%%%%%%%%%%%%%%%%%%%%%%%%%%%%%%%%%%

	\begin{figure}
	\centering
	\includegraphics[scale=0.8]{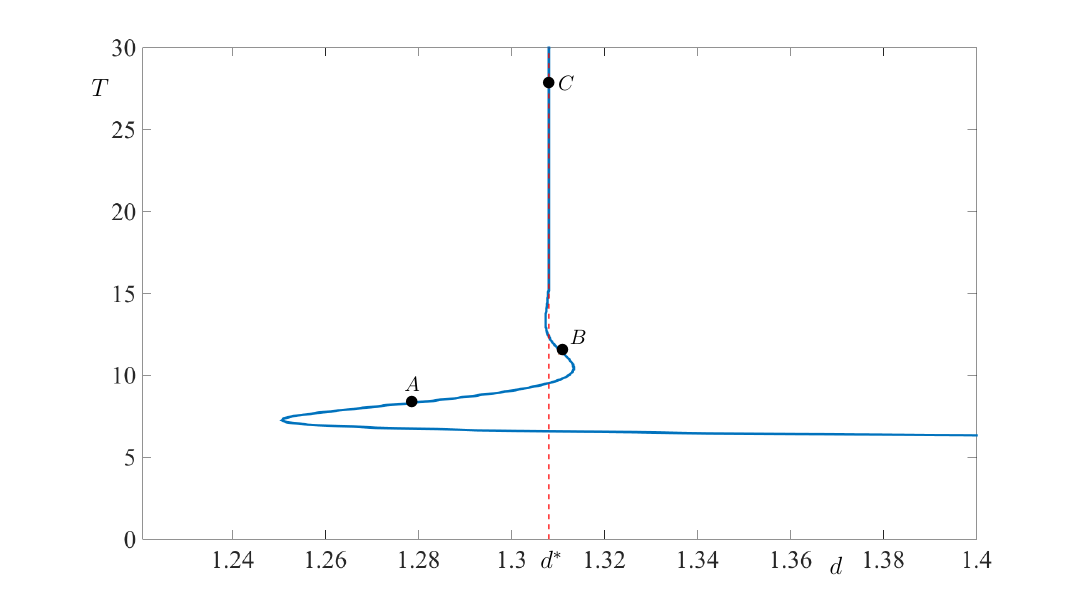}
	\caption{Bifurcation curve of the period $T$ of periodic orbits with respect to $d$, near the homoclinic bifurcation at $\mb{p}$. Here $m=0.0463358$ is fixed and the other parameters as are as in Fig.~\ref{fig:pulsodeonda}.}
	\label{fig:periodoend}
\end{figure}

It is essential to highlight that when the point $(m,d)$ crosses the PD curve from region II to region III, the cycle $\Gamma$ does not disappear but just changes its stability. Fig.~\ref{fig:periodoend} shows the graph of the period $T$ of this cycle as a function of $d$. The bifurcation curve oscillates around the critical value $d^*\approx1.3080156$ for which the homoclinic bifurcation $h_q$ to $\mb{q}$ occurs. The amplitude of the oscillations decreases rapidly as the homoclinic limit is approached when $d$ tends to $d^*$; see~\cite{kuznetsov} and references therein. Indeed, the ``snaking" behavior of the bifurcation curve is typical of the main branch of periodic orbits near chaotic saddle-focus and focus-focus homoclinic bifurcations~\cite{foco-foco,wiggins}. At each of the infinitely many folds of the curve, a pair of periodic orbits is created via a saddle-node bifurcation of limit cycles. Some of the periodic orbits in this branch may further undergo period-doubling bifurcations changing their stability along the bifurcation curve. Fig.~\ref{fig:cambiociclo} shows three such periodic orbits, labelled as $\Gamma_A$, $\Gamma_B$, and $\Gamma_C$, respectively, corresponding to the points $A$, $B$ and $C$ in Fig.~\ref{fig:periodoend}. As $d$ approaches $d^*$, the cycles pass increasingly closer to $\mb{q}$ (see the sequence of panels (a1)-(b1)-(c1) in Fig.~\ref{fig:cambiociclo}). As a result, one obtains wave trains which spend longer transients close to the equilibrium values (see the sequence of panels (a2)-(b2)-(c2) in which the period $T$ of each cycle is rescaled to 1). Hence, one can think of the homoclinic orbit $\Gamma_q$ (and its corresponding wave pulse) as the limit of this sequence of periodic orbits (resp. wave trains) of increasing period as $d\rightarrow d^*$. Furthermore, each of the periodic orbits bifurcated from the period doubling phenomena in {subsection}~\ref{sec:cycles-pd} may also increase their periods and undergo a convergence to $n$-homoclinic orbits in a similar fashion. Some of these secondary homoclinic bifurcations are mentioned before in \S\ref{sec:homoclinic} and shown in Fig.~\ref{fig:2homp}.

	\begin{figure}
	\includegraphics[width=\textwidth]{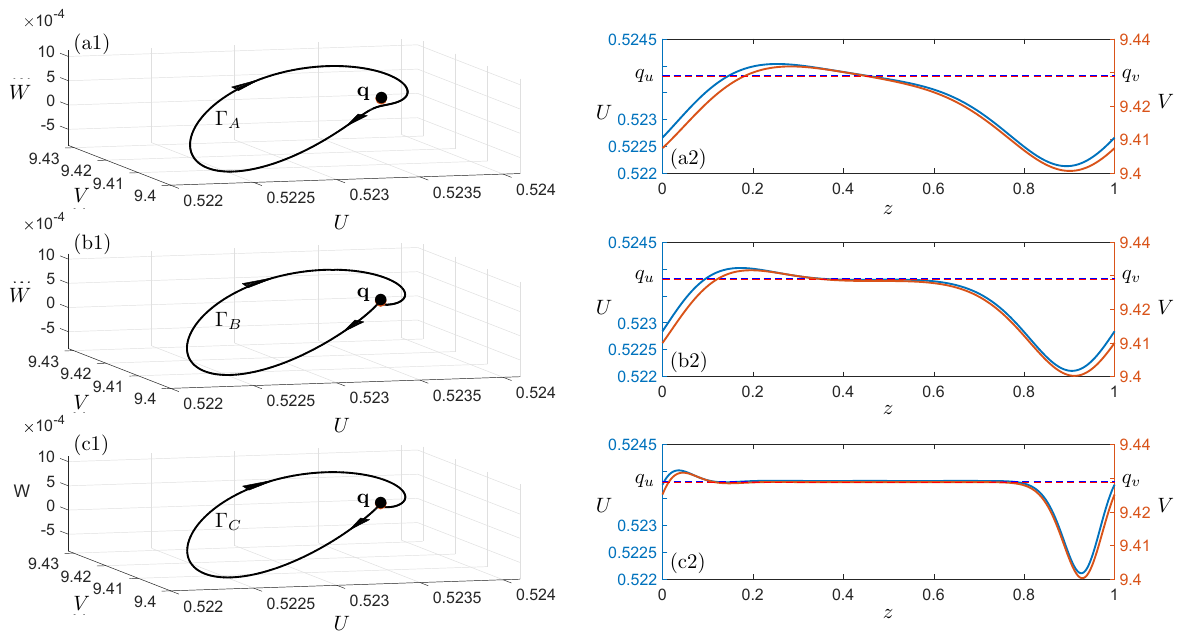}
	\caption{The different periodic orbits $\Gamma_A$, $\Gamma_B$ and $\Gamma_C$ (in panels (a1), (b1) and (c1)) and associated wave trains (panels (a2), (b2) and (c2)). While the periods in the time series are uniformly rescaled to 1 for computational purposes, the actual periods of the cycles are $T=8.2967$ for $\Gamma_A$, $T=11.4382$ for $\Gamma_B$, and $T=27.8801$ for $\Gamma_C$. Parameter values are $d=1.2785$ (in panels (a)), $d=1.3107$ (in panels (b)), and $d=1.3080$ (in panels (c)), with $m=0.0463358$ fixed. The other parameters as are as in Fig.~\ref{fig:pulsodeonda}.}
	\label{fig:cambiociclo}
\end{figure}

%%%%%%%%%%%%%%%%%%%%%%%%%%%%%%%%%%%%%%%
\section{Heteroclinic connections and wave fronts}
\label{sec:heteroclinic}
%%%%%%%%%%%%%%%%%%%%%%%%%%%%%%%%%%%%%%%

Fig.~\ref{fig:heteroI}(a) shows a heteroclinic orbit, labeled as $\Gamma_{q,p}$,  obtained for $(m,d)=(0.0463358, 2.4)$ in region I. The heteroclinic connection is oriented from $\mb{q}$ to $\mb{p}$ as parameter $z$ ---that which parametrizes the curve--- is increased. The resulting time series are shown in Fig.~\ref{fig:heteroI}(b), and they correspond to a wave front traveling from the steady state $\mb{q}$ that decays exponentially to $\mb{p}$ in synchronized oscillatory fashion. This non-monotonic behavior is explained by the presence of a pair of stable complex-conjugate eigenvalues (with negative real part) of $\mb{p}$ when parameters $(m,d)$ are in region I. 
The connecting orbit $\Gamma_{q,p}$ lies in the intersection of the global invariant manifolds $W^s(\mb{p})$ and $W^u(\mb{q})$. Namely, $\Gamma_{q,p}$ is contained in the two-dimensional unstable manifold $W^u(\mb{q})$ ---represented in Fig.~\ref{fig:heteroI}(a) as a transparent red surface--- and approaches $\mb{p}$ along its three-dimensional stable manifold $W^s(\mb{p})$ (not shown).
Moreover, since $W^s(\mb{p})$ and $W^u(\mb{q})$ are, respectively, three and two-dimensional immersed smooth manifolds 
in $\mathbb{R}^4$, one may anticipate that the intersection $W^s(\mb{p})\cap W^u(\mb{q})$ will generically be transverse~\cite{hirsch}. Indeed, this is in agreement with the fact that $W^s(\mb{p})\cap W^u(\mb{q})$ corresponds to a one-dimensional object in $\mathbb{R}^4$. Hence, one may reliably expect the heteroclinic orbit $\Gamma_{q,p}$ and, hence, its associated wave front to persist under small parameter variations: The resulting wave front may vary the amplitude of its oscillations and the actual asymptotic values, but the qualitative behavior of the traveling front remains unaltered throughout.

	\begin{figure}
	\includegraphics[width=\textwidth]{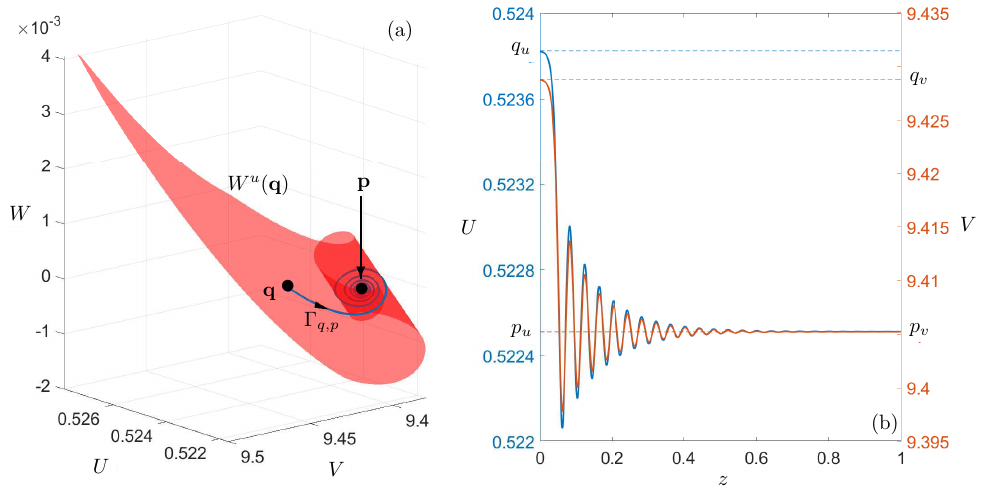}
	\caption{The connecting orbit $\Gamma_{q,p}$ lies in the intersection of the global invariant manifolds $W^s(\mb{p})$ and $W^u(\mb{q})$ in panel (a). In panel (b) the associated wave front travels from the steady state $\mb{q}$ and its amplitude decays exponentially fast to $\mb{p}$ showing oscillations. Parameter values are $(m,d)=(0.0463358,2.4)$ and the other parameters remain fixed as in Fig.~\ref{fig:pulsodeonda}.}
	\label{fig:heteroI}
\end{figure}

As parameter $(m,d)$ crosses the $H_p^-$ curve from region I to region II, a pair of stable eigenvalues of $\mb{p}$ cross the imaginary axis and become unstable; in the process, a limit cycle branches from $\mb{p}$ in a supercritical Hopf bifurcation. Hence, in region II, $W^s(\mb{p})$ is a one-dimensional manifold and the connection $\Gamma_{q,p}$ does not exist. Rather, it is replaced by a heteroclinic orbit that joins $\mb{q}$ to the bifurcated cycle.
Fig.~\ref{fig:heteroEtoP}(a) shows the bifurcated periodic orbit, labeled as $\gamma$ after $(m,d)$ has entered region II from region I.
The unstable manifold $W^u(\mb{q})$ (red surface) rolls up around $\gamma$ and intersects the three-dimensional stable manifold $W^s(\gamma)$ (not shown) transversally along a heteroclinic orbit, labeled as $\Gamma_{q,\gamma}$. This heteroclinic connection is associated with the traveling wave shown in Fig.~\ref{fig:heteroEtoP}(b); this is a front transitioning from the steady state at $\mb{q}$ into a periodic pattern around $\mb{p}$. The heteroclinic orbit $\Gamma_{q,\gamma}$ (and its traveling front) is preserved in an open subset of region II, and it disappears when $(m,d)$ crosses the PD curve towards region III as $\gamma$ loses its stability in a period doubling bifurcation.

	\begin{figure}
	\includegraphics[width=\textwidth]{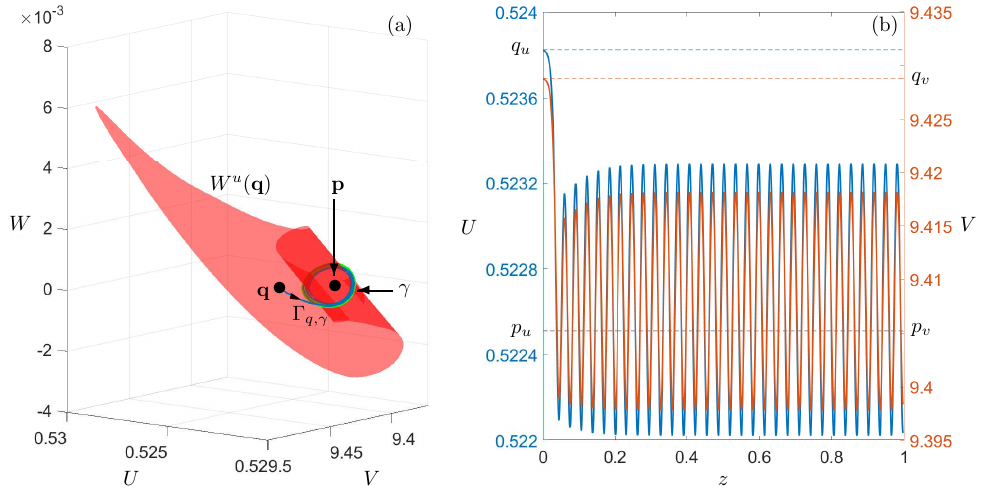}
	\caption{The connecting orbit $\Gamma_{q,\gamma}$ lies in the intersection of the global invariant manifolds $W^s(\gamma)$ and $W^u(\mb{q})$ in panel (a). The associated wave front travels from the steady state $\mb{q}$ and adopts a periodic behavior oscillating around the equilibrium values of $\mb{p}$ in panel (b). Parameter values are $(m,d)=(0.0463358,1.7)$ and the other parameters remain fixed as in Fig.~\ref{fig:pulsodeonda}.}
	\label{fig:heteroEtoP}
\end{figure}

If $(m,d)$ is in an open subset of regions III, IV, and V, one can find a wave front traveling from $\mb{p}$ to $\mb{q}$. This front travels in the opposite direction to that in Fig.~\ref{fig:heteroI} and, hence, it corresponds to a third kind of wave.
 The wave front traveling from $\mb{p}$ to $\mb{q}$ is shown in Fig.~\ref{fig:heteroII}(b). The wave begins at the steady state $\mb{p}$ with oscillations of increasing amplitude until it settles at $\mb{q}$.
This front corresponds to an intersection of the manifolds $W^s(\mb{q})$ and $W^u(\mb{p})$ forming a heteroclinic orbit in the phase space of \eqref{sist}; Fig.~\ref{fig:heteroII}(a) shows the heteroclinic orbit (labelled as $\Gamma_{p,q}$) and the two-dimensional manifold $W^s(\mb{q})$ of $\mb{q}$ as a transparent blue surface. The connection $\Gamma_{p,q}$ is an orbit in the three dimensional unstable manifold $W^u(\mb{p})$ which lies on $W^s(\mb{q})$ to converge to $\mb{q}$.

	\begin{figure}
	\includegraphics[width=\textwidth]{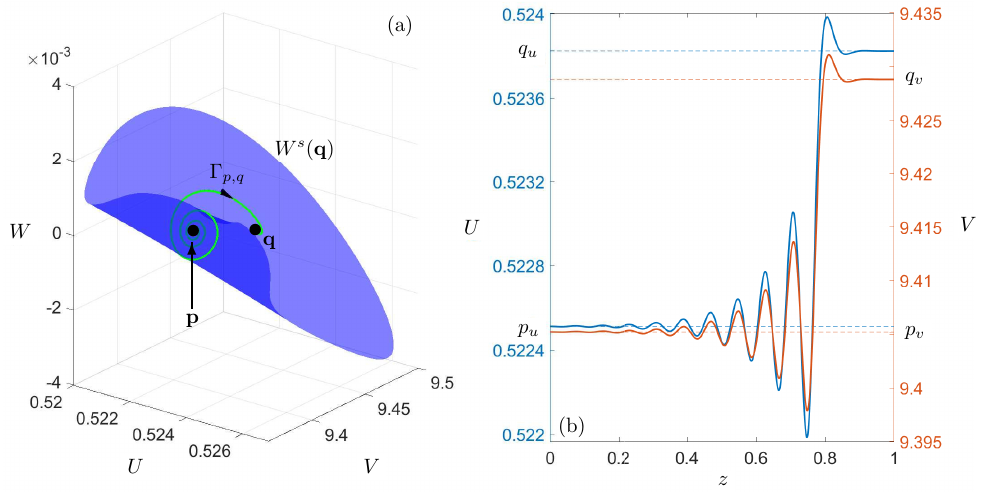}
	\caption{The connecting orbit $\Gamma_{p,q}$ lies in the intersection of the global invariant manifolds $W^u(\mb{p})$ and $W^s(\mb{q})$ in panel (a).  In panel (b) the associated wave front travels from the steady state $\mb{p}$ and its amplitude increases exponentially fast before settling down at $\mb{q}$. Parameter values are $(m,d)=(0.0463358,1.4)$ and the other parameters remain fixed as in Fig.~\ref{fig:pulsodeonda}.}
	\label{fig:heteroII}
\end{figure}

%%%%%%%%%%%%%%%%%%%%%%%%%%%%%%%%%%
\section{Multiple wave fronts at the focus-focus homoclinic bifurcation}
\label{sec:focusfocus}
%%%%%%%%%%%%%%%%%%%%%%%%%%%%%%%%%

In \S\ref{sec:homoclinic}, we described the complicated dynamics that can be found near the focus-focus homoclinic bifurcation $\Gamma_q$ when $(m,d)\in h_q$. One of the consequences of this fact is the appearance of multiple wave fronts of type $\Gamma_{p,q}$ (described in \S\ref{sec:heteroclinic}) which coexist with the main wave pulse $\Gamma_q$. We explain this finding here by direct, close inspection of the invariant manifolds involved.

	\begin{figure}
	\includegraphics[trim = 15mm 0mm 0mm 0mm, clip,width=\textwidth]{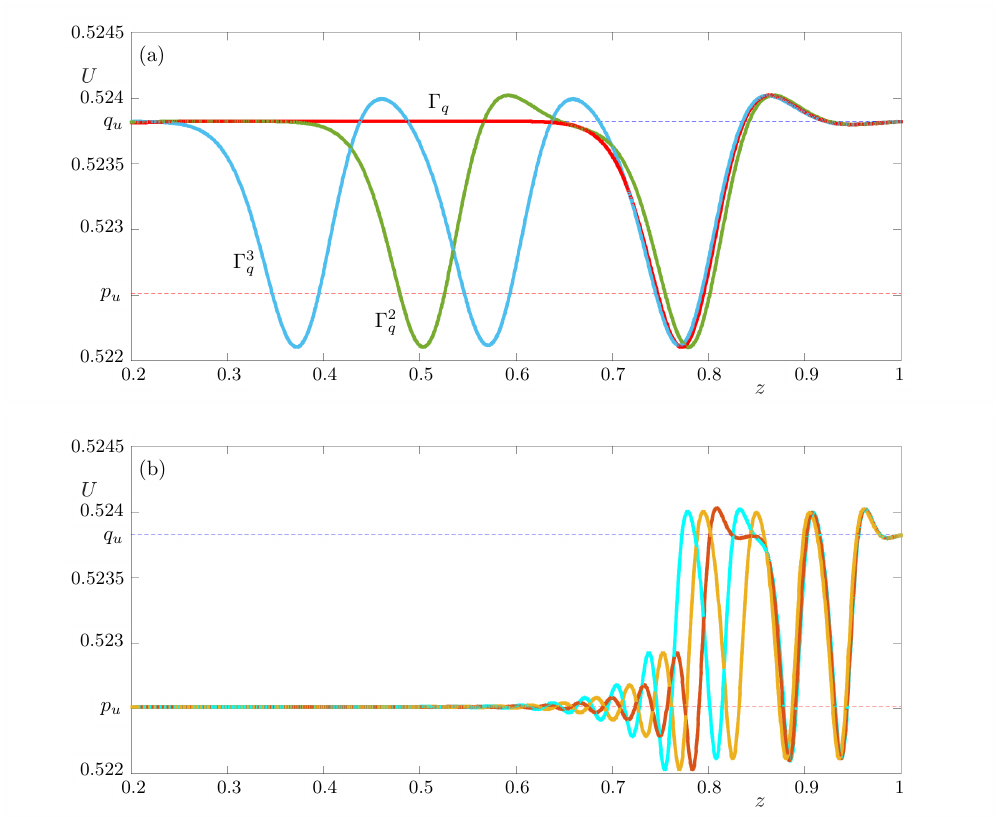}
	\caption{The stable manifold $W^s(\mb{q})$ of $\mb{q}$ projected onto the $UVW$ space in panel (a) when $(m,d)\approx(0.0463358,1.3080156)\in h_q$. The manifold $W^s(\mb{q})$ contains a family of coexisting heteroclinic orbits which join the equilibrium $\mb{p}$ to $\mb{q}$ as well as the primary focus-focus homoclinic connection $\Gamma_q$ to $\mb{q}$. One of such heteroclinic orbits is represented by the trajectory $\Gamma_{p,q}$. Panels (b) and (c) show enlargements near the equilibria $\mb{p}$ and $\mb{q}$, respectively.
	The other parameter values are the same as in Fig.~\ref{fig:pulsodeonda}.}
	\label{fig:variedadeshom}
\end{figure}

	\begin{figure}
	\centering
	\includegraphics[trim = 8mm 0mm 8mm 0mm, clip,scale=0.8]{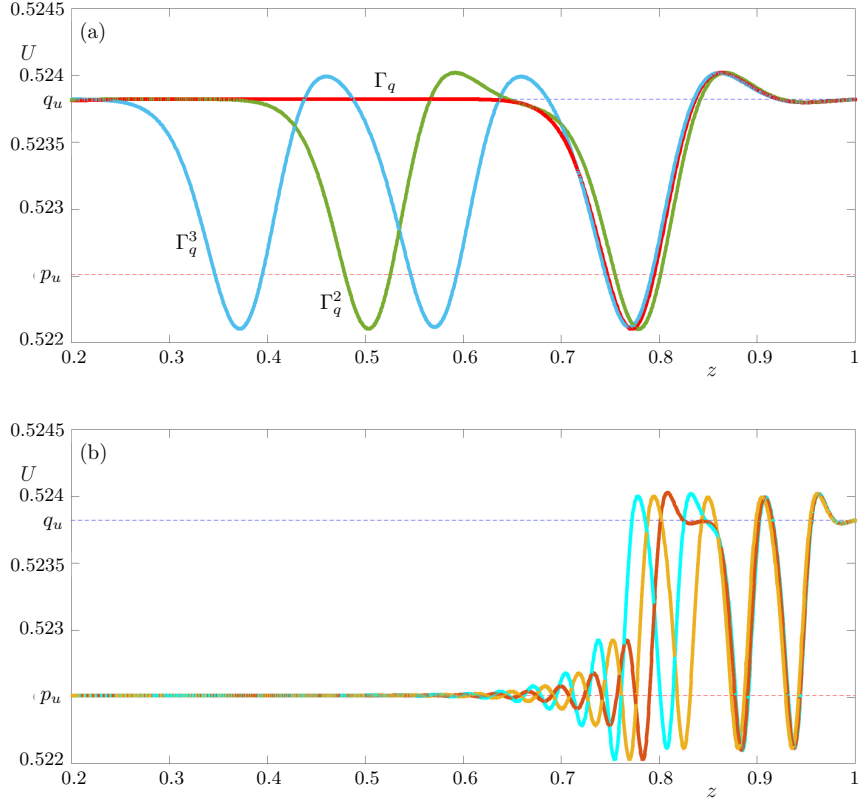}
	\caption{Profiles of $U$ associated with the wave pulses $\Gamma_q$, $\Gamma_q^2$ and $\Gamma_q^3$ in panel (a);  and three representative traveling fronts $\Gamma_{p,q}^1$, $\Gamma_{p,q}^2$, and $\Gamma_{p,q}^3$ in panel (b), when $(m,d)\approx(0.0463358,1.3080156)\in h_q$.  The other parameter values are the same as in Fig.~\ref{fig:pulsodeonda}.}
	\label{fig:heteroshom}
\end{figure}

Fig.~\ref{fig:variedadeshom} shows the projection of $W^s(\mb{q})$ onto the $UVW$ space, when 
$$(m,d)\approx (0.0463358,1.3080156)\in h_q.$$
 Also shown is the homoclinic orbit $\Gamma_{q}$ (red curve). The two-dimensional manifold $W^s(\mb{q})$ is rendered as a transparent blue surface. Some orbits in $W^s(\mb{q})$ lie in the three-dimensional unstable manifold $W^u(\mb{p})$ forming heteroclinic connections. In our computations, we detected {seven} heteroclinic orbits  contained in $W^u(\mb{p})\cap W^s(\mb{q})$.
 Most of these heteroclinic connections are very close to one another and very hard to distinguish from each other; we show one of them (cyan curve labeled as $\Gamma_{p,q}$) in Fig.~\ref{fig:variedadeshom}. 
Each of these {seven} heteroclinic orbits corresponds to a different wave front traveling from $\mb{p}$ to $\mb{q}$, which are present in the system at the same time.

Further, there are also 2- and 3-homoclinic orbits to $\mb{q}$ in $W^u(\mb{q})\cap W^s(\mb{q})$, coexisting with the primary homoclinic orbit and the heteroclinic connections in phase space; we opted to not show these subsidiary homoclinic orbits in Fig.~\ref{fig:variedadeshom} for visualization purposes.
Fig.~\ref{fig:heteroshom}(a) shows the profiles of $U$ associated with the primary wave pulse $\Gamma_q$ and the secondary 2- and 3-pulse waves (labeled as $\Gamma_q^2$ and $\Gamma_q^3$, respectively) associated with the secondary homoclinic orbits. In turn, Fig.~\ref{fig:heteroshom}(b) shows three representative traveling fronts associated with the family of heteroclinic orbits in phase space.
The existence of these families of wave fronts  when $(m,d)\in  h_q$ indicates that there must be a sequence of associated global bifurcations  as $(m,d)$ approaches the $h_q$ curve. At each of these bifurcation events, the manifolds $W^u(\mb{p})$ and $W^s(\mb{q})$ intersect tangentially in $\mathbb{R}^4$ along a (newly created) heteroclinic orbit. As $(m,d)$ moves closer to the $h_q$ curve, the intersection becomes transversal and the heteroclinic orbit persists under small parameter variations. Similar events happen in the case of the secondary homoclinic orbits in the intersection of $W^u(\mb{q})$ and $W^s(\mb{q})$, that explain the emergence of 2- and 3-pulse waves as parameters $(m,d)$ approach the $h_q$ curve.

The method to detect these connections is explained as follows. 
If $W^s(\mb{q})$ contains a heteroclinic orbit flowing from $\mb{p}$ to $\mb{q}$, such trajectory is approximated by a bounded solution contained in a family $\widehat{W}^s_{\delta,T}(\mb{q})$ of orbit segments passing sufficiently close to $\mb{p}$ and satisfying a two-point boundary value problem. Every solution in  $\widehat{W}^s_{\delta,T}(\mb{q})$ is continued up to an integration time $T$ (which is a free parameter in this continuation) and is parametrized by a unique location $\delta\in[0,1)$ in a fundamental domain; see~\cite{aguirrehom,shilnikov,numericalmanifold} for more details.
Indeed, if a heteroclinic  orbit exists, the two-parameter continuation procedure effectively stops as the integration time diverges. In practice, an approximation of such connecting orbit is obtained at some specific  $\delta=\delta^*\in[0,1)$ with a large integration time $T=-T^*$. A similar criterium can be used to detect secondary $n$-homoclinic orbits to $\mb{q}$ as orbit segments ending near $\mb{q}$ as the integration time $T$ diverges.
For instance, in Fig.~\ref{fig:variedadeshom} and Fig.~\ref{fig:heteroshom}, the fundamental domain $\delta\in[0,1)$ is divided into 13 sub-segments by the values $0<\delta_1<\delta_2<\ldots <\delta_{12}<1$. The heteroclinic connections correspond to $\delta_1\approx0.317181$, $\delta_2\approx0.317265$, $\delta_6\approx0.319579$, $\delta_8\approx0.319621$, $\delta_9\approx0.319904$, $\delta_{11}\approx0.319917$ and $\delta_{12}\approx0.331911$. On the other hand, we have the primary homoclinic orbit at $\delta_7\approx0.319616$, and four secondary homoclinic orbits at $\delta_{3}\approx0.317284$, $\delta_4\approx 0.317285$, $\delta_5\approx 0.317285$
and $\delta_{10}\approx0.319906.$

%%%%%%%%%%%%%%%%%%%%%%%%%%%%%%%%%%%%%%%
\section{The influence of propagation speed and diffusion ratio}
\label{sec:cd-plane}
%%%%%%%%%%%%%%%%%%%%%%%%%%%%%%%%%%%%%%%

	\begin{figure}
	\centering
	\includegraphics[trim = 0mm 0mm 5mm 0mm, clip,scale=0.8]{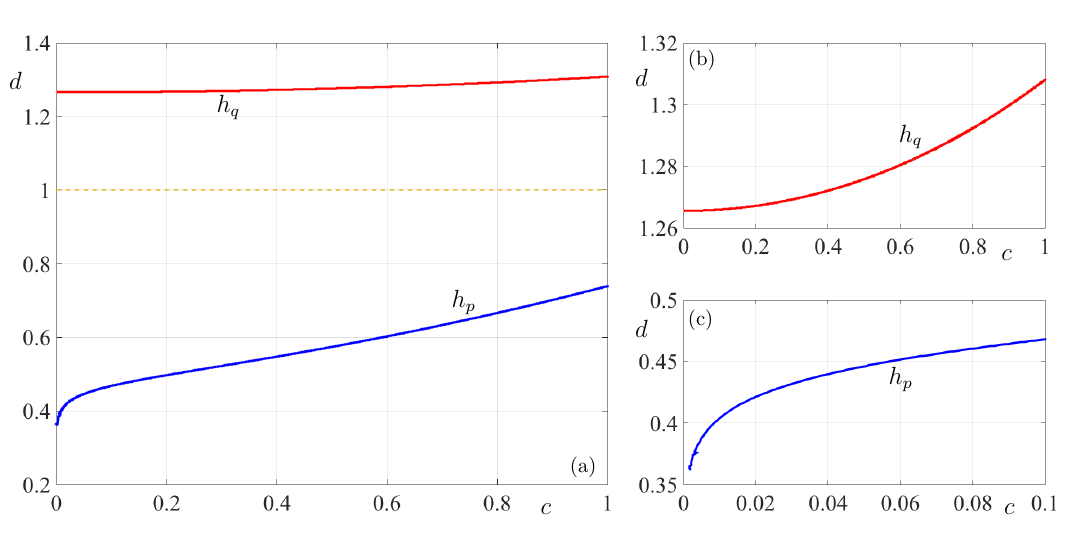}
	\caption{The homoclinic bifurcation curves $h_p$ and $h_q$ in the $(c,d)$-plane in panel (a). Panels (b) and (c) show enlargements near the curves $h_q$ and $h_p$, respectively (the apparently different shape of each curve is due to the different scales in each plot). Parameter $m=0.0463358$ and the other parameters are as in Fig.~\ref{fig:pulsodeonda}.}
	\label{fig:hompcd}
\end{figure}

The study and results reported so far in sections~\ref{sec:bif}--\ref{sec:focusfocus} were produced with fixed wave speed $c=1$. Here we ask ourselves if there is a minimum wave speed needed for the existence of some of the traveling waves we have found. To this end, we consider the homoclinic orbits $\Gamma_p$ and $\Gamma_q$ existing at the curves $h_p$ and $h_q$, respectively, when $m\approx0.0463358$, and continue them in parameters $c$ and $d$.
Fig.~\ref{fig:hompcd} shows the homoclinic bifurcation curves $h_p$ and $h_q$ in the $(c,d)$-plane. The existence of both homoclinic orbits is determined by a positive correlation between the wave speed $c$ and the diffusion ratio $d$; namely, as $c$ decreases, the wave pulses exist provided $d$ becomes sufficiently small. %, or equivalently, as the predator {diffusion coefficient} grows larger compared to the prey {diffusion coefficient}.

In the case of $h_p$, the relation between $c$ and $d$ in Fig.~\ref{fig:hompcd} is almost linear for $c\geq 0.1$. Indeed the curve $h_p$ can be approximated {as $d \approx 0.2925828c + 0.4333269,$ for} $0.1\leq c<1$, with a root mean square error $e=0.00521$.
In particular, the computed segment of the curve $h_p$ is located in the halfspace $d<1$. Hence, this kind of pulse wave with small wave speed $c<1$ occurs only if  $V$ propagates in a more efficient way than $U$.
On the other hand, as $c$ decreases below $0.01$, the diffusion ratio $d$ drops abruptly in a non linear way in the {form $d=\mathcal{O}(c^{1/2})$}; see Fig.~\ref{fig:hompcd}(c). As $c$ is further decreased, the continuation procedure loses precision and the last point where we get convergence of the numerical scheme is at $c_{\rm min}=0.0016767$. 
Fig.~\ref{fig:hompcchico} shows the homoclinic orbit $\Gamma_p$ when $(c,d)=(0.0479321, 0.4449391)$ in panel (a), and its corresponding time series in panel (b). In panel (a), the orbit $\Gamma_p$ performs many low-amplitude turns in $W^u_\text{loc}(\mb{p})$ before developing the long excursion. The corresponding wave in panel (b) shows a slow pattern (in terms of $z$) of small amplitude oscillations followed by a fast large amplitude pulse in a small interval (of order $10^{-4}$) of parameter $z$, similar to typical dynamic behaviors with different time scales.
Indeed, if the {relation $d=\mathcal{O}(c^{1/2})$} still holds for $c\to 0$, then both \eqref{edp2} and \eqref{sist} become singular as $c\rightarrow0$ and $d\rightarrow0$; while these systems in the singular limit may be studied with tools from geometric singular perturbation theory~\cite{fenichel}, this is beyond the scope of this work.  
 Nevertheless, solutions $u(x,t)=U(x+ct)$, $v(x,t)=V(x+ct)$ of \eqref{edp} as $c\to 0$ {and $d=D_1/D_2\rightarrow0$} correspond to {stationary waves} in which, effectively, only the  propagation of $V$ is observable in the length scale $x$.

	\begin{figure}
	\includegraphics[width=\textwidth]{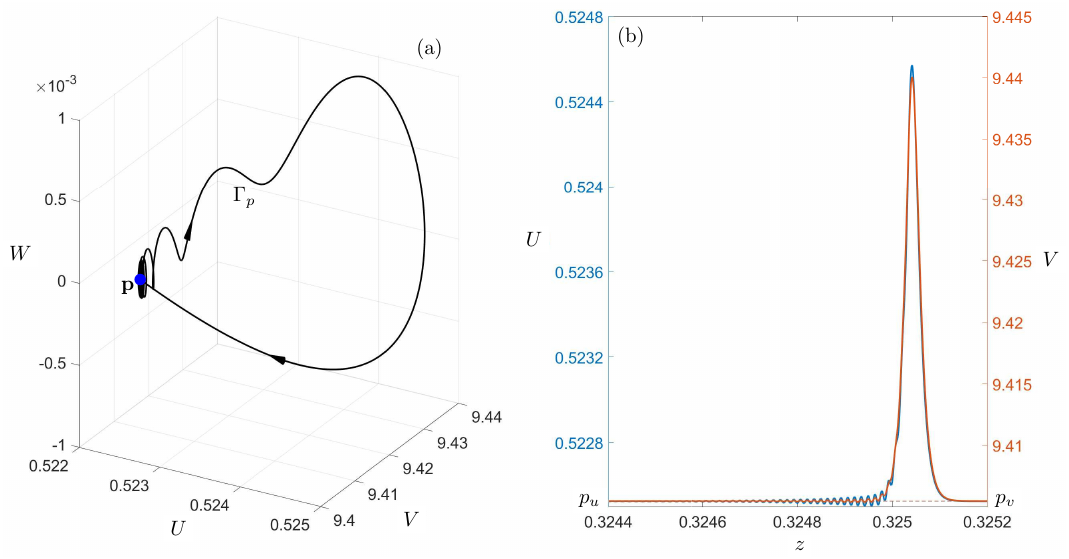}
	\caption{The homoclinic orbit $\Gamma_p$ to $\mb{p}$, when $(c,d)=(0.0479321, 0.4449391)$ in panel (a) and its wave profile in panel (b). The other parameter values are as in Fig.~\ref{fig:hompcd}.}
	\label{fig:hompcchico}
\end{figure}

As for the bifurcation curve $h_q$ in Fig.~\ref{fig:hompcd}(a)-(b), the dependence between $c$ and $d$ is approximately quadratic. That is, a homoclinic orbit to the focus-focus  $\mb{q}$ exists whenever $c$ and $d$ {satisfy $d \approx 0.0437881c^2 - 0.0016566c + 1.2656854$, for} $0\leq c\leq 1$; the root mean square error of this approximation is $e=6.8\times 10^{-5}$. 
In particular, the computed segment of the curve $h_q$ is located in the {half-space} $d>1$. Hence, this kind of pulse wave with small wave speed $c<1$ occurs only if $U$ propagates in a more efficient way than $V$.
Moreover, the numerical evidence suggests that a pulse wave exists for every $c>0$ arbitrarily small, i.e., there is no positive minimum value for the wave speed $c$. Indeed, the bifurcation curve $h_q$ can be continued down to $c_{\rm min}=0$ with $d_{\rm min}\approx 1.2656854$. (In particular, the value $d_{\rm min}>0$ prevents \eqref{edp2} and \eqref{sist} to become singular, unlike the case of $h_p$).
 In the limit as $c\rightarrow0$, the resulting wave pulse corresponds to a stationary solution of \eqref{edp} where both state variables $U$ and $V$} propagate with diffusion {ratio $D_1/D_2= d_{\rm min}$.

%%%%%%%%%%%%%%%%%%%%%%%%%%%%%%%%%%%%%%%
\section{Traveling waves restricted to invariant planes}
\label{sec:planes}
%%%%%%%%%%%%%%%%%%%%%%%%%%%%%%%%%%%%%%%

System \eqref{sist} has two invariant planes given by
$$
	\Pi_U=\{(U,V,W,R)\in \mathbb{R}^4 \ : \ V=R=0 \},
$$
and
$$
	\Pi_V=\{(U,V,W,R)\in \mathbb{R}^4 \ : \ U=W=0 \}.
$$
%The two-dimensional planes $\Pi_U$ and $\Pi_V$ can be understood as the spaces in which prey or predators, respectively, are extinct.
Note that the origin $\mb{p}_0\in \Pi_U\cap \Pi_V$. 
The restriction of \eqref{sist} to $\Pi_U$ is given by
\begin{eqnarray}
	X_U:\left\{\begin{array}{l}
	\dfrac{dU}{dz}=W,
	\\[1.5ex]
	\dfrac{dW}{dz}=\dfrac{1}{d}\left(cW-sU^2(U-m)(1-U)\right).
	\end{array}
	\right.
	\label{XU}
\end{eqnarray}

System \eqref{XU} has three equilibria: $\mathbf 0=(0,0)$, $(m,0)$ and $(1,0)$, which correspond to the restrictions of $\mb{p}_0$, $\mb{p}_m$ and $\mb{p}_1$, respectively, to $\Pi_U$. 
	The equilibrium $(m,0)$ is a hyperbolic repeller and
$(1,0)$ is a hyperbolic saddle of \eqref{XU}. This result is a direct consequence of Hartman-Grobman theorem. Indeed, the linear part of \eqref{XU} is given by
	\begin{align*}
		DX_U(U,W)=\begin{pmatrix}
		0 && 1
		\\
		\dfrac{s}{d}\left(4U^3-3(m+1)U^2+2mU\right) && \dfrac{c}{d}
		\end{pmatrix}.
	\end{align*}
	Therefore, evaluation of $DX_U$ at $(m,0)$ and $(1,0)$ is given, respectively, by
	\begin{align*}
		DX_U(m,0)=\begin{pmatrix}
		0 && 1
		\\
		\dfrac{s}{d}m^2(m-1) && \dfrac{c}{d}		
		\end{pmatrix}, \qquad DX_U(1,0)=\begin{pmatrix}
		0 && 1
		\\
		\dfrac{s}{d}(1-m) && \dfrac{c}{d}
		\end{pmatrix}.
	\end{align*}
	Furthermore, the eigenvalues of $DX_U(m,0)$ and $DX_U(1,0)$ are given, respectively, {by
	\begin{gather*}
		\lambda_{\pm}^m=\dfrac{c\pm \sqrt{c^2-4sdm^2(1-m)}}{2d}, \qquad \lambda_\pm^1=\dfrac{c\pm \sqrt{c^2+4sd(1-m)}}{2d}.
	\end{gather*}
	}
	Since $0<m<1$, then we have
	$	c^2-4sdm^2(1-m)<c^2.$
	Therefore, ${\rm Re}(\lambda_\pm ^m)>0$ and, hence, this proves that $(m,0)$ is a repeller. On the other hand, note that $c^2+4sd(1-m)>c^2$, so $\lambda_-^1<0<\lambda_+^1$.
	Thus, $(1,0)$ is a saddle.

In the case of the origin $(0,0)$ of \eqref{XU}, we have
	\begin{align*}
		DX_U(\mathbf 0)=\begin{pmatrix}
		0 & 1
		\\
		0 & \dfrac{c}{d}
		\end{pmatrix},
	\end{align*}
with associated eigenvalues $\lambda_1^0=0,$ {and $\lambda_2^0=c/d>0.$}
Hence, $\mathbf 0$ is a non-hyperbolic equilibrium.
	The eigenvectors of $DX_U(\mathbf 0)$ associated with $\lambda_1^0$ and $\lambda_2^0$ are $v_1^0=(1,0)^T$ and $v_2^0=(c,d)^T$, respectively. According to the centre manifold theorem~\cite{guckenheimer}, the origin of \eqref{XU} has a one-dimensional local centre manifold $W_\text{loc}^c(\mathbf 0)$ which is tangent to $v_1^0$ at $\mathbf 0$. This implies that $W_\text{loc}^c(\mathbf 0)$ can be represented locally as the graph of a function $W=W(U)$ that satisfies $W(0)=W'(0)=0$ (for further details, see~\cite{kuznetsov}). By taking the Taylor series expansion of this function around $U=0$, we have
$$
	W(U)=\sum_{k=2}^r a_kU^k+\mathcal O(U^{r+1}),
$$
where the coefficients $a_k\in \mathbb{R}$, and $\mathcal O(U^{r+1})$ are terms of order $r+1$ and higher of the Taylor series of $W(U)$. Here, the  coefficients $a_k$ are determined by substitution of $W=W(U)$ into \eqref{XU}. Thus, after some calculations, we obtain	
$$
		W(U)=-\dfrac{ms}{c}U^2+\mathcal O(U^3). \label{W(U)}
$$
If we restrict \eqref{XU} to $W_\text{loc}^c(\mathbf 0)$, we obtain the scalar differential equation
	\begin{align*}
		U'=W(U)=-\dfrac{ms}{c}U^2+\mathcal O(U^3).
	\end{align*}
	Then, for $U>0$ small enough, we have that $U'<0$. Therefore, $\mathbf 0$ is a local attractor in $W_\text{loc}^c(\mathbf 0)$. Since $\lambda_2^0>0$, it follows that $\mathbf 0$ is a non-hyperbolic saddle point of \eqref{XU}.

	\begin{figure}
	\centering
	\includegraphics[trim = 8mm 0mm 10mm 0mm, clip,scale=0.8]{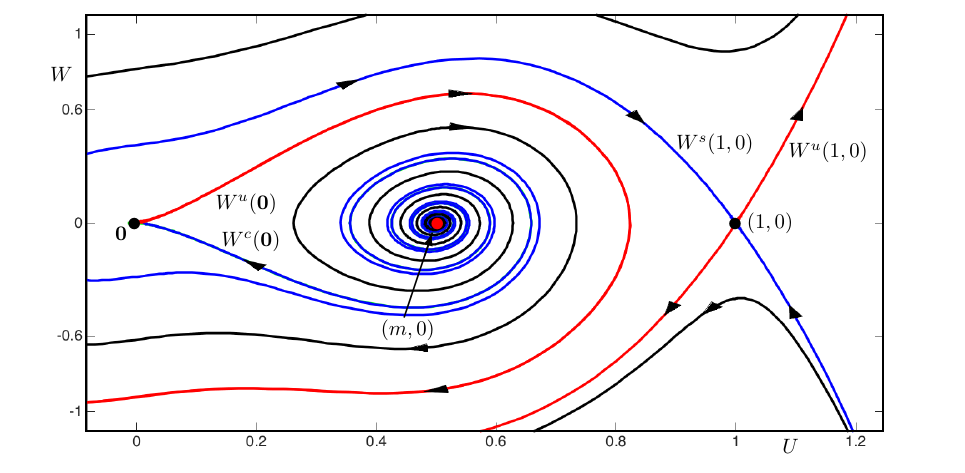}
	\caption{The centre manifold $W^c(\mathbf 0)$ extends itself (for $z<0$) along the flow of \eqref{XU} and forms a heteroclinic connection to $(m,0)$. This corresponds to a wave front in \eqref{edp}. Parameter values in \eqref{XU} are $(c,d,m,s)=(1,2.4,0.5,100)$.}
	\label{fig:VR0completo}
\end{figure}

Fig.~\ref{fig:VR0completo} shows the phase portrait of \eqref{XU}. The global centre manifold $W^c(\mathbf 0)$ extends itself (for $z<0$) along the flow of \eqref{XU} and forms a heteroclinic connection to $(m,0)$. This corresponds to a wave front in \eqref{edp} in which  $v\equiv0$ and $u\rightarrow0$ as $t\rightarrow\infty$. Moreover, since $W^u(m,0)$ is a two-dimensional unstable manifold, the intersection along $W^c(\mathbf 0)\cap W^u(m,0)$ is transversal. It follows that the associated wave front persists under small parameter variations. Notice that, under suitable parameter perturbations, the
manifolds $W^u(\mathbf 0)$ and $W^s(1,0)$ may come to intersect along a second heteroclinic orbit from $(0,0)$ to $(1,0)$.  Since this manifold intersection is non-transversal, the resulting heteroclinic orbit is structurally unstable, i.e., it might be broken under small parameter variation. However, the front itself could still persist in the PDE~\eqref{edp}, but perhaps with a slightly adjusted speed $c$.

%While finding the right parameter combination for this event to happen is beyond the scope of this paper, we can surely say that such intersection would be non-transversal. Hence, the associated traveling front would not be robust under parameter variation.

Finally, a similar analysis of \eqref{sist} restricted to the invariant plane $\Pi_V$ reveals that there are {either no homoclinic, heteroclinic orbits nor} limit cycles with non-negative $V$-coordinates in $\Pi_V$. %Therefore,  there are no plausible traveling waves of \eqref{edp} occurring in the absense of prey population.

%%%%%%%%%%%%%%%%%%%%%%%%%%%%%%%%%%%%%%%
\section{Discussion}
\label{sec:discussion}
%%%%%%%%%%%%%%%%%%%%%%%%%%%%%%%%%%%%%%%

We have investigated a diffusive process in a qualitative model~\eqref{difusion} inspired by some  minimal mathematical ingredients of a rescaled predator-prey interaction from~\cite{aguirre}.
%While model~\eqref{difusion} is a qualitative one.
Our goal was to undertake a systematic identification of traveling waves that may be encountered when the kinetic terms ---based on those in~\cite{aguirre}--- %the qualitatively equivalent mathematical interactions of predator and prey dynamics--- 
are stated in their most elementary form. In particular, the analysis was performed searching for traveling wave solutions with strictly non-negative $(u,v)$ coordinates. %Hence, this stylized model might be 
 While the considered system may rise some questions regarding its interpretability, this ``prototype model" may be understood in the same spirit as investigating a sort of topological normal form: namely, if the kinetic terms are presented in their most reduced (and yet mathematically meaningful) form, what can one expect in terms of spatiotemporal behavior if diffusion is taken care of.  Nevertheless, upon taking these observations into consideration, we opted not to call variables $u$ and $v$ as ``prey" and ``predator", respectively, in order to avoid misleading conclusions and interpretations of our results.

 In~\eqref{difusion} the three main types of traveling waves we were looking for appear: wave pulses, wave fronts, and wave trains. Table \ref{tab:resumenOV} shows a list of all the traveling waves exposed throughout this paper, together with the regions of the bifurcation diagram in Fig.~\ref{fig:bifurcacion} where they can be found. We highlight that we made this table according to the obtained information, and the mentioned regions may not be the only ones where the given traveling waves can be found. Regarding the diffusion process, we found the condition $d>1$ favors the presence of chaotic orbits in the traveling frame, i.e., the associated ODE system \eqref{sist} is more likely to be chaotic when $D_u$ is higher than $D_v$. %Contrary to a typical Turing pattern existence analysis as in~\cite{aly}, where $d<1$ is assumed, our findings show that the greater spreading characteristics of the prey are, the lesser monotony of the predator-prey interaction is observed. In other words, whenever conditions are met in a way that populations coexistence occurs, a distinguished relation between predator-prey transport efficiency and the carrying capacity and mean individual fitness ratio of preys seems to be crucial to prevent uncertain behaviors, which may come from chaotic response. 
Specifically, we can find chaos when $(m,d)$ lies in the neighborhood of the $h_p$, $h_q$ curves (homoclinic chaos) and the Neimark-Sacker curve $NS$ in the bifurcation diagram; see also the discussion in \S\ref{sec:homoclinic}.

\begin{table}
	% table caption is above the table
	\caption{Summary of the main traveling wave solutions.}
	\label{tab:resumenOV}       % Give a unique label
	% For LaTeX tables use
	\begin{tabular}{lll}
		\hline\noalign{\smallskip}
		Wave type & Region in $(m,d)$ plane & Orbit in phase space  \\
		\noalign{\smallskip}\hline\noalign{\smallskip}
		Type A wave pulse & Bifurcation curve $h_q$ & Focus-focus homoclinic orbit $\Gamma_q$ \\
		Type B wave pulse & Bifurcation curve $h_p$ & Saddle-focus homoclinic orbit $\Gamma_p$ \\
		Type C wave pulse & Neighborhoods of $h_q$ & Focus-focus 2-homoclinic orbit $\Gamma_q^2$ \\
		Type D wave pulse & Neighborhoods of $h_p$ & Saddle-focus 2-homoclinic orbit $\Gamma_p^2$ \\
		Type E wave pulse & Neighborhoods of $h_p$ & Saddle-focus 4-homoclinic orbit $\Gamma_p^4$ \\
		Type F wave pulse & Neighborhoods of $h_q$ & Focus-focus 3-homoclinic orbit $\Gamma_q^3$ \\
		Type A wave front & Region I & Heteroclinic orbit from $\mb{q}$ to $\mb{p}$ $\Gamma_{q,p}$ \\
		Type B wave front & $h_q$ and Regions III, IV, V & Heteroclinic orbit from $\mb{p}$ to $\mb{q}$ $\Gamma_{p,q}$ \\
		Type A wave train & Regions II, III, IV, V and VI & Limit cycle, $\Gamma$ \\
		Type B wave train & Regions III and V & 2-turns limit cycle, $\Gamma^2$ \\
		Type C wave train & Region III & 4-turns limit cycle, $\Gamma^4$ \\
		Type D wave train & Region III & 8-turns limit cycle, $\Gamma^8$ \\
		\noalign{\smallskip}\hline
	\end{tabular}
\end{table}

It is relevant to remember that there are regions where two or more types of traveling waves exist simultaneously. For example, if $(m,d)$ is in the intersection between $h_q$ and $h_p$, then there are two homoclinic orbits, {$\Gamma_q$} and $\Gamma_p$. Furthermore, let us remember that $\Gamma^2$ is branched from $\Gamma$ when $(m,d)$ crosses the period-doubling curve $PD$ from region II to III. However, $\Gamma$ does not disappear after this event. In fact, after the successive period doubling bifurcations, a large number of wave trains may be found together in the phase portrait of system \eqref{sist}. Finally, we should recall that when $(m,d)\approx(0.0463358,1.3080156)\in h_q$, there is a family of heteroclinic orbits, which go from $\mb{p}$ to $\mb{q}$. These arise from a transverse intersection between $W^u(\mb{p})$ and $W^s(\mb{q})$, which indicates that these solutions are robust under small changes of parameter values. %This result can be interpreted as follows: for suitable initial conditions and Allee threshold and sensitive diffusion ratio values, traveling fronts mimicking invasion-like traits of both populations are likely able to be observed, and they persist under any sufficiently small perturbation.

A relevant fact regarding the homoclinic orbit $\Gamma_q$ is that it presents Shilnikov focus-focus homoclinic chaos in a concrete model vector field. Indeed, most of the studies about this bifurcation have been carried out (so far) only theoretically~\cite{foco-foco,wiggins}. Moreover, as far as we know, this work represents the first example of the actual computation of a global two-dimensional invariant manifold involved in a focus-focus homoclinic bifurcation in $\mathbb{R}^4$.

%On the other hand, according to the summary given in table \ref{tab:resumenOV}, to obtain an oscillatory behavior of populations in space as time goes by, the condition is to be in any of the regions II-VI of the bifurcation diagram. In these, many wave trains with different spatiotemporal periods can be found. Finally, to avoid chaos and ensure that both species establish in a stationary state in which they survive in the long term, then the main requirement for the model parameters is to remain away from the homoclinic bifurcation curves and torus bifurcations. 

We also performed a thorough search for traveling waves connecting stationary states with $V=0$ to invariant objects with $U,V>0$ in each parameter region in Fig.~\ref{fig:bifurcacion} (In the particular case of $p_m$, we chose $c=10$ in order to ensure its eigenvalues are real). This procedure involved the computation of 3D and 4D unstable manifolds of equilibria $p_1$ and $p_m$, respectively, following the scheme from~\cite{siads20}. Unfortunately, all the computed orbits have negative $(U,V)$ coordinates for some $z>0$ and, hence, 
 did not match our initial criterium of non-negativity. % as biologically relevant solutions. 

While the amplitude of the solutions we found may be rather small, it is important to emphasize that our results are qualitative; that is, our aim was not to present specifically quantitative features of the predator-prey interaction  on which the system is based, but to show what kind of dynamical behavior one may hope to find, either theoretically or numerically. 
That said, the amplitude of the oscillations varies in markedly different scales for the rescaled variables, and variations in $V$ are much more pronounced than those of~$U$ for pulses, fronts and trains.
This difference in amplitudes can be explained by paying attention to the following observations: (i) quantities $u$ and $v$ spatially spread at distinct rates. This rate difference provides that the low-diffusivity variable promotes a heterogeneous aggregation of both variables along the domain; in consequence, both traveling-wave profiles propagate in a spatially non-homogenous shape, regardless of $d<1$ or $d>1$; 
(ii)~the  ratio in the original variables $u/v$  prevails as both amplitudes are of the same order in any scenarios here considered; (iii)~upon integrating system~\eqref{edp} along finite spatial domain, and assuming uniform convergence in time, we obtain that solutions satisfy formulae:
\begin{subequations}\label{eq:profinteg}
\begin{gather}
		\displaystyle\int\limits_{-L}^{L}\left[su(u-m)(1-u)(u+v)-auv\right]\:\text{d}x=\dfrac{d\mathcal{B}_1}{dt}\,, \label{eq:profintegA}\\[-2ex]
		\displaystyle\int\limits_{-L}^{L}\left[buv-gv(u+v)\right]\:\text{d}x=\dfrac{d\mathcal{B}_2}{dt} \,, \label{eq:profintegB}
\end{gather}
\end{subequations}
where the total ``masses" are given by $\mathcal{B}_1(t)=\int_{-L}^{L}u\:\text{d}x$ and $\mathcal{B}_2(t)=\int_{-L}^{L}v\:\text{d}x$, regardless of whether the boundary conditions for $u_x$ and $v_x$ vanish or cancel out each other at $x=\pm L$. Moreover, once we define the weighted total  mass by $\mathcal{B}(t):=b\mathcal{B}_1(t)+a\mathcal{B}_2(t)$, from~\eqref{eq:profinteg}, we obtain formula
\begin{gather}\label{eq:identity}
	\int\limits_{-L}^{L}\left[bsu(u-m)(1-u)-agv\right](u+v)\:\text{d}x=\dfrac{d\mathcal{B}}{dt}\,.
\end{gather}

Upon taking into account $L\gg1$, we get traveling-wave solutions of~\eqref{edp}, which are characterized by having profiles that keep their shape over time. Now, we consider a moving interval-frame $\mathcal{J}=[-L-ct,L-ct]$, which ``runs alongside'' the traveling profiles with the same speed~$c\geq0$. In so doing, integral in the left-hand side term of~\eqref{eq:identity} over $\mathcal{J}$ is constant for all $t\geq0$. That is, since traveling-wave profiles displacement only changes their position in time, the weighted total mass traveling speed is therefore conserved, i.e. $d\mathcal{B}/dt\equiv\mathcal{C}_0$, where $\mathcal{C}_0$ is constant.
Thus,  as $u+v\geq0$ for all $|x|\leq L$, the total  masses of the two  variables follow a conservation-like property for traveling-wave solutions.  Namely, as the wave variable $z=x+ct$ corresponds to a spatial translation for $t\geq0$, variable amplitudes balance each other to satisfy identity~\eqref{eq:identity} for a constant weighted total  mass rate of change $\mathcal{C}_0$.

It is interesting to note that identities in~\eqref{eq:profinteg} are also satisfied for stationary solutions; that is, upon setting $u_t=v_t=0$, we have relation~\eqref{eq:identity} with  $d\mathcal{B}/dt\equiv0$, when homogeneous Neumann boundary conditions are in place. Similar identities are obtained for Dirichlet, periodic and mixed-type boundary conditions, as well. In addition, this model may be able to develop diffusion-driven instabilities by means of Turing bifurcations. Nonetheless, in order to perform such an analysis, a stationary spatial pattern setting must be taken into account.

Finally, we studied the stability of every wave solution we detected, and we found that each of them is unstable. We followed the approach from~\cite{brena2014,sandstede2002stability,sandstede2001stability}: In a general PDE of the form
\begin{align*}
	\mathbf u_t&= \mathbf f(\mathbf u) + D \, \mathbf u_{xx},
\end{align*}
traveling waves are functions $\mathbf U(z)$, where $z=x+ct$. In particular, defining the new pair of variables $(\xi,t)=(x+ct,t)$, we have that
\begin{align*}
	\mathbf u_t= c \, \mathbf u_\xi + \mathbf u_t, \qquad \mathbf u_{xx}= \mathbf u_{\xi \xi}.
\end{align*}
This implies that the system can be written as
\begin{equation}\label{eq:stability}
	\mathbf u_t = -c \, \mathbf u_\xi + \mathbf f(\mathbf u) + D \, \mathbf u_{\xi \xi}.
\end{equation}
In our case, $\mathbf u=(u,v)$, $\mathbf f$ stands for the reaction part of  \eqref{edp}, and $D=
\left(
\begin{array}{cc}
d & 0\\
0&1
\end{array}
\right).
$
Notice that when $\mathbf u_t=\mathbf 0$ in \eqref{eq:stability} we recover the ODE system \eqref{edp2} for traveling waves. This means that traveling waves are a stationary solution of \eqref{eq:stability}. We then linearized it and evaluated the traveling waves we found. In particular, we obtained the spectral stability by numerically computing the ten eigenvalues with the largest real part of the Jacobian matrix of \eqref{eq:stability} and noticed that (in every run) the spectrum obtained had at least one eigenvalue with positive real part.
We also tried with different mesh refinements and the results were consistent.

 Last but not least, as a consequence of recasting a reaction-diffusion equation as a dynamical system with infinite dimensions, the multiplicity of solutions is large ---one for each initial condition. As the solutions we found turned out to be unstable, we focused on highlighting the zoo of travelling waves of the system stressing the analysis of the chaotic behavior of such solutions. Hence, no integration of the reaction-diffusion system was carried out since that would not help to clarify the ideas about the analysis here presented.
Disregarding the fact that the solutions are unstable, the analysis performed here is novel in the area of dynamical systems and we expect this to provide future guidance to study Shilnikov homoclinic focus-focus chaos with an emphasis on the stable and unstable manifolds of related homogeneous steady states.

\bibliographystyle{plain}
\bibliography{biblio_ago2224}

\end{document}